\declaretheorem[name=Primitive, style=remark]{primitive}
\newcommand{\prim}[1]{\hyperref[prim:#1]{Primitive~\ref*{prim:#1}}}
\declaretheorem[name=Assumption, style=remark]{assumption}
\newcommand{\assum}[1]{\hyperref[assum:#1]{Assumption~\ref*{assum:#1}}}
\title{Universal Quantum Speedup for Branch-and-Bound, Branch-and-Cut, and Tree-Search Algorithms}
\author{Shouvanik Chakrabarti }
\author{Pierre Minssen }
\author{Romina Yalovetzky}
\author{Marco Pistoia}
\affil{Global Technology Applied Research, JPMorgan Chase \& Co.}
\date{}
\mathchardef\mhyphen="2D
\newcommand{\bnb}{\textrm{Branch-and-Bound}}
\newcommand{\bnc}{\textrm{Branch-and-Cut}}
\newcommand{\branch}{\mathbf{branch}}
\newcommand{\cost}{\mathbf{cost}}
\newcommand{\qtsize}[1]{\mathbf{QuantumTreeSize}\left({#1}\right)}
\newcommand{\qtsearch}[1]{\mathbf{QuantumTreeSearch}\left({#1}\right)}
\newcommand{\qtminleaf}[1]{\mathbf{QuantumMinimumLeaf}\left(#1\right)}
\newcommand{\qsubtree}[2]{\mathbf{QSubtree}_{#1}\left(#2\right)}
\newcommand{\qsubtreelocal}[1]{\mathbf{QSubtreeLocal}\left(#1\right)}
\newcommand{\iqbb}{\mathbf{Incremental\mhyphen Quantum\mhyphen Branch\mhyphen and\mhyphen Bound}}
\newcommand{\iqbbtext}{\text{Incremental-Quantum-Branch-and-Bound}}
\newcommand{\iqbc}{\mathbf{Incremental\mhyphen Quantum\mhyphen Branch\mhyphen and\mhyphen Cut}}
\newcommand{\iqts}{\mathbf{Incremental\mhyphen Quantum\mhyphen Tree\mhyphen Search}}
\newcommand{\trunc}[1]{\mathbf{TruncateCost}\left(#1\right)}
\newcommand{\ptrunc}[1]{\mathbf{TruncateParentCost}\left(#1\right)}
\newcommand{\twotrunc}[1]{\mathbf{TwoSidedTruncate}\left(#1\right)}
\newcommand{\kthcost}[1]{\mathbf{Kthcost}\left(#1\right)}
\newcommand{\nextcost}[1]{\mathbf{MinimumNextCost}\left(#1\right)}
\newcommand{\curr}{\mathit{current}}
\newcommand{\done}{\mathit{done}}
\newcommand{\heur}{\mathbf{heur}}
\newcommand{\hcost}{\mathbf{hcost}}
\newcommand{\hlocal}{\mathbf{hlocal}}
\newcommand{\hparent}{\mathbf{hparent}}
\newcommand{\incumbent}{\textit{incumbent}}
\newcommand{\bestbound}{\textit{best-bound}}
\newcommand{\cp}{\mathbf{cp}}
\newcommand{\f}{\mathbf{f}}
\newcommand{\degree}{\mathit{deg}}
\begin{document}

\maketitle

\begin{abstract}
    \noindent Mixed Integer Programs (MIPs) model many optimization problems of interest in Computer Science, Operations Research, and Financial Engineering. Solving MIPs is NP-Hard in general, but several solvers have found success in obtaining near-optimal solutions for problems of intermediate size. Branch-and-Cut algorithms, which combine Branch-and-Bound logic with cutting-plane routines, are at the core of modern MIP solvers.
    Montanaro proposed a quantum algorithm with a near-quadratic speedup compared to classical Branch-and-Bound algorithms in the worst case, when every optimal solution is desired. In practice, however, a near-optimal solution is satisfactory, and by leveraging tree-search heuristics to search only a portion of the solution tree, classical algorithms can perform much better than the worst-case guarantee.
    In this paper, we propose a quantum algorithm, Incremental-Quantum-Branch-and-Bound, with \textit{universal} near-quadratic speedup over classical Branch-and-Bound algorithms for \emph{every} input, i.e., if a classical Branch-and-Bound algorithm has complexity $Q$ on an instance that leads to solution depth $d$, Incremental-Quantum-Branch-and-Bound offers the same guarantees with a complexity of $\tilde{O}(\sqrt{Q}d)$. Our results are valid for a wide variety of search heuristics, including depth-based, cost-based, and $A^{\ast}$ heuristics.  Corresponding universal quantum speedups are obtained for Branch-and-Cut as well as general heuristic tree search. Our algorithms are directly comparable to MIP solving routines in commercial solvers, and guarantee near quadratic speedup whenever $Q \gg d$. We use numerical simulation to verify that $Q \gg d$ for typical instances of the Sherrington-Kirkpatrick model, Maximum Independent Set, and Mean-Variance Portfolio Optimization; as well as to extrapolate the dependence of $Q$ on input size parameters. This allows us to project the typical performance of our quantum algorithms for these important problems. 
\end{abstract}

\section{Introduction}
\label{sec:intro}

An \textit{Integer Program} (IP) is a mathematical optimization or feasibility problem in which some or all of the variables are restricted to be integers. A problem where some of the variables are continuous is often referred to as a \textit{Mixed Integer Program} (MIP). IPs and MIPs are ubiquitous tools in many areas of Computer Science, Operations Research, and Financial Engineering. They have been used to model and solve problems such as combinatorial optimization~\cite{papadimitrou1998combinatorial}, Hamiltonian ground-state computation~\cite{baccari2020verifying,Wang2022}, network design~\cite{grotschel1990integer,grotschel1995polyhedral}, resource analysis~\cite{zoltners1980integer}, and scheduling~\cite{sousa1992time,van1999polyhedral}, as well as various problems in finance, including cash-flow management, combinatorial auctions, portfolio optimization with minimum transaction levels or diversification constraints~\cite{cornuejols2006optimization}. Integer Programming is NP-Hard in general, but modern solvers~\cite{gurobi,cplex} are often quite successful at solving practical problems of intermediate sizes, which has led to the development of the aforementioned applications.

The practicality of IP solvers is due to a host of techniques that are employed to bring down the program's execution time, which remains exponential in most cases, but often with significant speedups over naive methods. Such techniques include rounding schemes, cutting-plane methods, search and backtracking schemes, Bender decomposition techniques, and \bnb{} algorithms. There is some overlap between these methods, which are often used in combination for best performance. In recent years, \bnc{} algorithms~\cite[Chapter 11]{cornuejols2006optimization} have emerged as the central technique for solving MIPs. Notably, the core method for the two most common commercially available MIP solvers, Gurobi~\cite{gurobi} and CPLEX~\cite{cplex}, is a \bnc{} procedure. The phrase \bnc{} refers to the combination of two techniques, as \bnc{} is a \bnb{}~\cite{land2022doig} algorithm augmented with cutting-plane methods. \bnc{} algorithms have the desirable feature that while they proceed, they maintain a bound on the \emph{gap} between the quality of the solution found at any point and the optimal solution. This allows for early termination whenever the gap falls to 0 or below a precision parameter input by the user.

Due to the ubiquity and usefulness of \bnc{} algorithms, it is natural to ask whether quantum algorithms can provide a provable advantage over classical algorithms for \bnb{} or, more generally, \bnc{}. In recent years, quantum algorithms have been shown to be provably advantageous for many continuous optimization problems, including Linear Systems~\cite{harrow2009linear}, Zero-Sum Games~\cite{li2019sublinear}, Generalized Matrix Games~\cite{li2021sublinear}, Semi-definite Programs~\cite{brandao2017quantum}, and General Zeroth-Order Convex Optimization~\cite{van2020convex,chakrabarti2020quantum}, as well as some discrete optimization problems, such as Dynamic Programming~\cite{ambainis2019dynamic} and Backtracking~\cite{montanaro2018backtracking}. There are also conditional provable speedups for some interior point methods, for example for Linear Programs (LPs), Semi-Definite Programs (SDPs)~\cite{kerenidis2018sdp}, and Second-Order Cone Programs (SOCPs)~\cite{kerenidis2018sdp}.

Montanaro~\cite{montanaro2020branch} proposed an algorithm with a provable near-quadratic quantum speedup over the worst-case classical \bnb{} algorithms where the task is to ensure that \emph{all} optimal solutions are found. However, this quantum algorithm does not accommodate early stopping based on the gap, thereby not accounting for the fact that, in most practical situations, one is satisfied with any solution whose quality is within some precision parameter of the optimal. Therefore, the true performance of classical \bnb{} on a problem can be very different from the worst case considered by Montanaro. Furthermore, simply \emph{estimating} the worst-case classical performance can be much more computationally intensive than simply finding an approximate solution to an IP. This makes it difficult to know whether the quantum advantage holds for any practical problem, as well as to estimate what the true expected performance of the quantum algorithm would be.

Therefore, we look for a stronger notion of quantum speedup, which we term ``universal speedup''. We say that a quantum algorithm for a task has a \textit{universal speedup} over a classical algorithm if the speedup is over the \emph{actual performance} of the classical algorithm in \emph{every case}. Informally, this means that if a classical algorithm gets ``lucky'' on a particular family of inputs, so does the associated quantum algorithm. Universal speedup also allows the true performance of the quantum algorithm to be empirically inferred from that of the classical variant, as every classical execution has a provably more efficient quantum variant.

Ambainis and Kokainis~\cite{ambainis2017tree} gave a universal speedup for the backtracking problem. The question of whether such a speedup is possible for $\bnb{}$ has been left open in Montanaro's work \cite{montanaro2020branch}. The main goal of this paper is to shed a light on this question. 

\subsection{Classical \bnc{}}
\label{sec:classical-bnc}
\subsubsection{\bnb{}} 
While there are many possible \bnb{} algorithms (see~\cite{clausen1999branch,morrison2016branch} for a survey of techniques), they all follow a formally similar approach, where an input MIP is solved using a tree of \emph{relaxed} optimization problems that can be solved efficiently: each of them yields a solution that may not be feasible. If the solution is not feasible, it is then used to produce new relaxed sub-problems, each of which  produces a solution of worse quality than its parent, but is chosen so that the solution obtained is more likely to be feasible for the original MIP. Eventually, this process leads to feasible solutions, which form the leaves of the explored tree. Classical \bnb{} methods search for the leaf with the best solution quality.

Treating the relaxed problems as the nodes of a tree allows a \bnb{} algorithm to be described as exploring a \bnb{} tree, specified by the following parameters:
\begin{enumerate}
    \item The root $N_0$ of the tree
    \item A branching function $\branch$ that for any input node $N$ returns the set of the children of $N$
    \item A cost function $\cost$ mapping any node to a real number
\end{enumerate}
For a tree $\CT$ to be a valid \bnb{} tree, it must be finite, and satisfy the so called \emph{\bnb{} condition}, which states that for any node $N \in \CT$, we have
$\mathbf{cost}(N) \le \min_{N' \in \branch(N)}\{\cost(N')\}$,
i.e., any node has lower cost than its children. Note that these conditions simply formalize the description above: we choose $\mathbf{cost}$ to capture the quality of the solution at a node (where lower is better). In this correspondence, a node represents a more restricted subproblem than its parent and must therefore have a higher cost.

Given this definition, we have the following abstract formulation of the \bnb{} problem:
\begin{definition}[Abstract \bnb{} Problem]
\label{defn:abstract-bnb}
Given a tree $\mathcal{T}$ with root $N_0$, error parameter $\epsilon$, and oracles $\branch,\cost$ satisfying the \bnb{} condition, return a node $N$ so that $N$ is a leaf, and for any leaf $L$ of $\CT$, $\cost(N) \le \cost(L) + \epsilon$.
\end{definition}

An example of a \bnb{} specification is in Mixed Integer Programming with a linear, convex quadratic, or general convex function. This situation is common in financial engineering, e.g., for portfolio optimization. Consider an MIP $\mathcal{I}$ whose objective is to minimize a function $\mathbf{f} \colon \mathcal{F} \to \R$, where $\mathcal{F} \subseteq \R^{D_1} \times \Z^{D_2}$ is specified by a set of $m$ constraints. A relaxed problem can be obtained by expanding the domain of optimization, for example by removing the integrality constraints on the last $D_2$ variables. The resulting relaxed problem can be solved efficiently, i.e., in time $\poly(m,D)$, where $D = D_1 + D_2$. Each node of the \bnb{} tree contains such a relaxed problem. Specifically, for each node $N$, $\cost(N)$ is defined as the value of the objective at the solution of the relaxed problem at $N$. We generate sub-problems as follows. Let the optimal solution $\vec{x}$ at $N$ have a coordinate $x_j : j > D_1$ with value $c \notin \Z$. We obtain new problems by adding either $x_j \le \lfloor c \rfloor$ or $x_j \ge \lceil c \rceil$ as constraints to the relaxed problem at $N$.

A \bnb{} algorithm proceeds by searching the \bnb{} tree using a search heuristic $\CH$ to find the leaf with minimum cost. We use the following terminology in the rest of the paper. The nodes of the tree are grouped into three categories: ``undiscovered'', ``active'', and ``explored'':
\begin{enumerate}
    \item \emph{Undiscovered} nodes are those that the algorithm has not seen yet, i.e., they have not been returned as the output of any $\branch$ oracle call.
    \item An \emph{active} node is one that has been discovered by the algorithm, but has not had the $\branch$ oracle called on it, i.e., the node has been discovered, but its children have not.
    \item An \emph{explored} node is one that has been discovered and whose children have also been discovered by calling the $\branch$ oracle on the node itself.
\end{enumerate}
With this setup, we can describe the algorithm as follows for a tree $\CT$: 

\begin{enumerate}
    \item The list of active nodes $\verb|active|$ is initialized to contain only the root node, $N_0$.
    \item The next candidate for exploring is returned by applying the search heuristic, $\CH$, to the list $L_{\mathrm{active}}$. Once a node is explored, it is removed from $L_{\mathrm{active}}$ and its children are added.
    \item During exploration, two quantities are maintained:
    \begin{enumerate}[nosep]
        \item The \emph{incumbent}: the minimum cost of any discovered leaf. This is an upper bound on the possible value of the optimal solution.
        \item The \emph{best-bound}: the minimum cost of any active node. Any newly discovered node will have a greater cost than this so this is a lower bound on the optimal solution.
    \end{enumerate}
    \item The gap between the incumbent value and the best-bound is an upper bound on the sub-optimality of the incumbent solution. Thus, when the gap falls below some $\epsilon$, the incumbent can be returned as an $\epsilon$-approximate solution.
\end{enumerate}

\subsubsection{Cutting Planes}
Consider a MIP with feasible set $\CF$ and let $\CF'$ be the feasible set with its integrality constraints relaxed. A \textit{cutting plane} is a hyperplane constraint that can be added, such that the true feasible set $\CF$ is unchanged, but  some infeasible solutions are removed from the relaxed feasible set $\CF'$. A cutting-plane algorithm repeatedly adds cutting planes to the relaxed MIP until the solution to the relaxed problem is feasible for the original (unrelaxed) MIP. Since no such solutions are removed due the cutting planes, the solution thus obtained is optimal. \bnc{} combines cutting planes with \bnb{}. As the \bnb{} tree $\CT$ is searched, cutting-plane methods are used at various nodes to obtain new cutting planes. A cutting plane at a node $N$ is \emph{local} if it is only a valid cutting plane for the sub-tree of $\CT$ rooted at $N$, and \emph{global} if it applies to the whole tree. A classical \bnc{} problem simply searches the \bnb{} tree as discussed previously and, when a cutting plane is found, adds it to the constraints.

\subsubsection{Search Heuristics}
A \bnb{} algorithm requires a search heuristic $\CH$ to specify which active node is to be explored next. For a heuristic $\CH$ to be practically usable, it must have an implicit description that does not require explicitly writing down a full permutation over $T$ elements (where $T$ is often exponential in problem-size parameters). 

In the rest of the paper, we restrict ourselves to the class of search heuristics consisting of those that rank nodes based on a combination of \textit{local information} (that can be computed by calling $\branch,\cost$ at the node itself) and information that can be obtained from the output of $\branch,\cost$ at the parents of the node. We define the corresponding family of heuristics as follows:
\begin{definition}[Branch-Local Heuristics]
\label{defn:branch-local}
 A \emph{Branch-Local Heuristic} for \bnb{} tree, is any heuristic that ranks nodes in ascending order of the value of some function $\heur$, where for any node $N$ at depth $d(N)$ in the tree, with path $r \to n_1 \to \dots \to n_{d(N)-1} \to N$ from the root, 
 \begin{equation}
    \heur(N) = f(\hlocal(N),\hparent(r),\hparent(n_1)\dots,\hparent(n_{d(n)-1}),d(N))
\end{equation}
where each of $\hlocal,\hparent$ makes a constant number of queries to $\branch,\cost$, and $\f$ is a function with no dependence on $N$.
\end{definition}
The above definition captures all commonly used search heuristics that we are aware of, including the three following main categories:
\begin{enumerate}
    \item \textbf{Depth-first heuristics} The heuristic $h$ is specified by a sub-function $h' \colon \CN \mapsto  \CS(2)$  that orders the children of any node. The tree is explored via a depth-first search where among two children of the same node, the child ordered first by $h'$ is explored first.
    \item \textbf{Cost-based heuristics} The heuristic ranks nodes simply by the value of the $\cost$ function. Despite its simplicity, this is the most commonly used search heuristic for $\bnb{}$ and is used by both Gurobi and CPLEX~\cite{gurobi,cplex}. A simple generalization is to consider arbitrary functions of the $\cost$.
    \item \textbf{A$^{\ast}$ heuristics} These heuristics rank nodes by the value of $\heur(N) =\hcost(N) + d(N)$, where $\hcost(n)$ makes a constant number of queries to $\branch$ and $\cost$, and $d(N)$ is the depth of $N$.
\end{enumerate}
The family of heuristics in \defn{branch-local} is probably broader than needed to include all search methods used in practice. In particular, it includes all the search strategies for \bnb{} discussed in \cite[Section 3]{morrison2016branch}, and for parallel \bnb{} in \cite{clausen1999best}. Nevertheless, it is possible that some \bnb{} algorithm uses a heuristic that does not fall under this definition and such algorithms will not be covered by our results. Finally, we note that \defn{branch-local} does not require the explored tree to be a \bnb{} tree and can be easily extended to any tree, even one where there is no $\cost$ oracle, in which case we simply take $\cost$ to be a constant. In the presentation, we will therefore often omit explicitly passing $\cost$ to functions unless required.

\subsection{Discussion of Existing Results}
\label{sec:existing}

The primary existing result on \bnb{} algorithms is the following due to Montanaro~\cite{montanaro2020branch}.
\begin{theorem}[\!{\cite[Theorem 1]{montanaro2020branch}}]
\label{thm:montanaro-bnb-existing}
Consider a \bnb{} tree $\mathcal{T}$ specified by oracles $\branch,\cost$. Let $d$ be the depth of $\mathcal{T}$, and the cost of the solution node be $c_{\min}$. Let $c_{\max}$ be a given upper bound on the cost of any explored node. Assume further that $\mathcal{T}$ has $T_{\min}$ nodes with cost $\le c_{\min}$. Then there exists a quantum algorithm solving the \bnb{} problem on $\mathcal{T}$ with failure probability at most $\delta$ using $\tilde{O}\left(\sqrt{T_{\min}} d^{3/2} \log(c_{\max})\left(\frac{1}{\delta^2}\right)\right)$ queries to $\branch,\cost$.
\end{theorem}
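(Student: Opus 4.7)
The plan is to combine Montanaro's quantum backtracking algorithm with a minimum-finding loop over cost thresholds. The key structural observation is that the \bnb{} condition forces the set of nodes of cost at most $c$ to form a rooted subtree $\mathcal{T}_{\le c} \subseteq \mathcal{T}$ containing $N_0$: I can simulate the $\branch$ oracle of $\mathcal{T}_{\le c}$ by calling $\branch$ on $\mathcal{T}$ and filtering out children whose cost exceeds $c$. Moreover, $c_{\min}$ is precisely the smallest $c$ for which $\mathcal{T}_{\le c}$ contains a leaf of the original tree, and any such leaf is an optimum.

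The first building block is a subroutine $\textsc{Detect}(c)$ that either returns a leaf of $\mathcal{T}$ of cost at most $c$ or reports that none exists, with constant failure probability. I would implement it by running Montanaro's quantum backtracking algorithm~\cite{montanaro2018backtracking} on the pruned tree $\mathcal{T}_{\le c}$, marking exactly those nodes which are leaves of $\mathcal{T}$ (distinguishable with a single $\branch$ query). This takes $\tilde{O}\bigl(\sqrt{|\mathcal{T}_{\le c}|}\,d^{3/2}\bigr)$ queries to $\branch,\cost$, where the $d^{3/2}$ factor arises from the phase-estimation precision required on the spectral gap of the quantum walk over $\mathcal{T}_{\le c}$.

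Next I would wrap $\textsc{Detect}$ in a cost-threshold search. Rather than calling it on arbitrary $c$, for which $|\mathcal{T}_{\le c}|$ may vastly exceed $T_{\min}$, I would interleave it with an increasing guess on a size bound $S$: feed backtracking a claimed size $S$ (its quantum-walk cost scales as $\sqrt{S}\,d^{3/2}$, and a failure to locate a marked node is treated as evidence that $|\mathcal{T}_{\le c}| > S$) and double $S$ whenever the call fails. Simultaneously I would binary-search $c$ on a grid of $O(\log c_{\max})$ cost levels. The last successful pair $(c,S)$ satisfies $c = c_{\min}$ and $S = O(T_{\min})$; the geometric growth of $S$ together with the logarithmic number of cost levels ensures that the cumulative work across earlier attempts is dominated by this final round, yielding an overall complexity of $\tilde{O}\bigl(\sqrt{T_{\min}}\,d^{3/2}\log c_{\max}\bigr)$.

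The target failure probability is reached by amplifying each $\textsc{Detect}$ call to failure $O(\delta/\log c_{\max})$ and union-bounding over rounds; the extra $1/\delta^2$ factor in the stated bound comes from amplitude- or tree-size-estimation subroutines used implicitly when $|\mathcal{T}_{\le c}|$ is not known a priori, whose Chebyshev-style sample complexity scales with inverse variance. The main obstacle I foresee is precisely the amortization in the previous paragraph: one must show that pairs $(c,S)$ with either too-large $c$ or too-small $S$ collectively cost no more than the final round. This is a doubling-strategy analysis in the spirit of D\"urr--H{\o}yer minimum finding adapted to trees of unknown size, and is the only place where genuinely new work is required; the remainder is a black-box invocation of~\cite{montanaro2018backtracking} combined with routine bookkeeping.
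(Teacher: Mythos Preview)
This theorem is not proved in the paper: it is quoted verbatim as \cite[Theorem~1]{montanaro2020branch} in the ``Discussion of Existing Results'' section and later repackaged as the black-box \prim{qtminleaf}. There is therefore no proof here to compare against; the paper simply invokes Montanaro's result as a primitive.

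That said, your sketch is a faithful reconstruction of Montanaro's original argument: truncate $\mathcal{T}$ to $\mathcal{T}_{\le c}$ using the \bnb{} condition, run quantum backtracking to detect a true leaf in the truncated tree, and interleave a binary search over the cost threshold with a geometric doubling of the size guess so that the cumulative cost is dominated by the final round at $c=c_{\min}$, $S=\Theta(T_{\min})$. The only place where your write-up is slightly loose is the provenance of the $1/\delta^2$ factor; in Montanaro's analysis this arises from the tree-size-estimation subroutine (here \prim{qtsize}) rather than from amplifying \textsc{Detect}, but you flag this correctly as coming from an estimation step with Chebyshev-type scaling. No gap; you have simply reproved a cited result that the present paper takes for granted.
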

We note that, due to the subsequent refinement of a tree search algorithm by Apers, et al.~\cite{apers2019walk}, the query complexity in \thm{montanaro-bnb-existing} is trivially improved to $\tilde{O}\left(\sqrt{T_{\min}} d \log(c_{\max})\left(\frac{1}{\delta^2}\right)\right)$. A classical \bnb{} algorithm that is required to output \emph{all} optimal solutions (a stronger requirement than in \defn{abstract-bnb} must search at least $T_{\min}$ nodes of the \bnb{} tree, thereby making at least $T_{\min}$ queries to $\branch,\cost$. \thm{montanaro-bnb-existing} thus provides a nearly quadratic speedup for this scenario, whenever $d \ll \sqrt{T_{\min}}$. In most practical problems $T_{\min}$ is exponentially larger than $d$ so this condition is satisfied.

The problem of returning a single approximate solution (as in \defn{abstract-bnb}) may not require $O(T_{\min})$ queries to $\branch,\cost$, and it is not clear whether the speedup from \thm{montanaro-bnb-existing} does not automatically apply. In fact, even if the error parameter is 0, the early termination of \bnb{} may be able to find an exact solution without exploring $O(\sqrt{T_{\min}})$ queries, as whenever the gap falls to $0$ the existing incumbent solution has been proved optimal. True worst-case bounds on classical algorithms for \defn{abstract-bnb} are hard to characterize analytically, and infeasible to infer from empirical data. Furthermore, the true performance of the algorithm on practical families of instances can be much (even exponentially) better than the worst case. Thus even if \thm{montanaro-bnb-existing} were to provide a worst case speedup, this may not translate to a speedup for problems of interest. Finally, the algorithm in \thm{montanaro-bnb-existing} is easily augmented with local cutting planes by modifying the branch oracle, but cannot be used with global cutting planes as the whole algorithm has only one stage and cutting planes discovered at a node are therefore unavailable to any nodes outside the subtree containing the descendants of that node.

A closely associated problem to \bnb{} is that of tree search. Backtracking can be abstractly formulated as the problem of finding a marked node in a tree. Ambainis and Kokainis demonstrated a universal speedup for backtracking when the tree is explored using an \emph{depth-first} heuristic, as is common in backtracking algorithms. We state a version of their result (equivalent to \cite[Theorem 7]{ambainis2017tree})
\begin{theorem}
\label{thm:universal-backtracking-bound}
Consider a tree $\CT$ with depth $d$ specified by a $\branch$ oracle, and a marking function $\f$ that maps nodes of the tree to 0 or 1, where a node $n$ is marked if $\f(n) =1$. Assume that at least one node is marked. Then let $\CA$ be a classical algorithm that explores $Q$ nodes ($O(Q)$ queries to $\branch$) of $\CT$ using some depth-first heuristic and outputs a marked node in $\CT$, and otherwise returns $0$. For $0 \le \delta \le 1$, there exists a quantum algorithm that makes $\tilde{O}\left(\sqrt{Q}d^{3/2}\log^2\left(\frac{n \log(Q)}{\delta}\right)\right)$ queries to $\branch$ and with probability at least $1 - \delta$, returns the same output as $\CA$.
\end{theorem}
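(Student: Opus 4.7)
The plan is to combine a doubling search over a budget $\tilde{Q}$ with quantum tree search restricted to a truncated tree $\CT_{\tilde{Q}}$ consisting of (approximately) the first $\tilde{Q}$ nodes visited by the depth-first order that $\CA$ uses. For each trial value $\tilde{Q} = 2, 4, 8, \dots$ I would run a quantum subroutine that either returns a marked node in $\CT_{\tilde{Q}}$ or certifies none is present, terminating once $\tilde{Q}$ reaches the first power of two at least $Q$; this contributes only $O(\log Q)$ overhead. Because the heuristic is deterministic and depth-first, for any $\tilde{Q}\ge Q$ the node that $\CA$ outputs lies in $\CT_{\tilde{Q}}$, so reproducing $\CA$'s output reduces to finding any marked node there (the specific winner is enforced by the same DFS tie-breaking used by $\CA$).

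The central construction is an oracle $\branch'_{\tilde{Q}}$ for $\CT_{\tilde{Q}}$ that, on input $N$, returns only those children of $N$ which lie in the truncated tree. Because $\CH$ is depth-first and branch-local, a node $N$ is among the first $\tilde{Q}$ DFS-visited nodes if and only if the total size of the subtrees rooted at the left-siblings of its ancestors along the root-to-$N$ path is at most $\tilde{Q}$. These $O(d)$ subtree sizes can each be estimated to a constant multiplicative factor via the quantum tree-size estimation primitive with budget $\tilde{Q}$, at a cost of $\tilde{O}(\sqrt{\tilde{Q}})$ queries per estimate after amplification by the standard median-of-repetitions trick. Along any particular root-to-node path we need at most $d$ such estimates, and these can be cached and reused, so $\branch'_{\tilde{Q}}$ adds only an amortized $\tilde{O}(\sqrt{\tilde{Q}}\,d^{1/2})$ overhead per underlying $\branch$ query.

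Given $\branch'_{\tilde{Q}}$, I would invoke the tree-search primitive of Apers et al., which on a tree of size $T$ and depth $d$ makes $\tilde{O}(\sqrt{T}\,d)$ oracle queries, with $T\le O(\tilde{Q})$. Composing the two costs and amortizing the tree-size estimates across the walk gives a per-iteration quantum cost of $\tilde{O}(\sqrt{\tilde{Q}}\,d^{3/2})$; summing over the $O(\log Q)$ doubling rounds and absorbing the logarithmic factors coming from error amplification and a union bound over the $\mathrm{poly}(Q,d)$ estimation calls produces the claimed $\tilde{O}(\sqrt{Q}\,d^{3/2}\log^2(n\log(Q)/\delta))$ bound. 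If some iteration returns a marked node, a short classical verification along the returned path (using $\branch$ and $\f$) is sufficient to confirm it agrees with $\CA$'s output; if every iteration up to $\tilde{Q}=2Q$ returns failure, the algorithm returns $0$, again matching $\CA$.

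The hardest part will be establishing the robustness of $\branch'_{\tilde{Q}}$ under the approximate nature of tree-size estimation. Two issues must be addressed: (i) the multiplicative slack in each estimate can only inflate the effective truncated tree by a constant factor, so a safety margin (for example, truncating at $c\tilde{Q}$ instead of $\tilde{Q}$) preserves the $\sqrt{Q}$ scaling; and (ii) a single failed estimate along the path to the target node could make the quantum walk miss the marked node entirely. Handling (ii) requires a conditional-correctness argument: conditioned on no estimator failing along any of the $O(d)$ relevant subtree queries, the walk correctly locates a marked node, and the total failure probability is driven below $\delta$ by amplifying each estimator to failure probability $\mathrm{poly}(\delta/(Qd))$ before taking the union bound. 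The rest of the proof is bookkeeping of the logarithmic factors so that the amplification and union-bound overhead collapses into the $\log^2(n\log(Q)/\delta)$ factor in the final bound.
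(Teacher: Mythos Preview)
The paper does not prove this theorem; it is quoted in \sec{existing} as an existing result of Ambainis and Kokainis, so there is no ``paper's own proof'' to compare against. Your outline does match the high-level structure of their argument (exponential doubling over a budget $\tilde{Q}$, construction of a truncated subtree containing the first $\tilde{Q}$ DFS-visited nodes, then quantum tree search on that subtree), and the final complexity you arrive at is correct. However, the way you describe the construction of $\branch'_{\tilde{Q}}$ contains a real gap.

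You propose that $\branch'_{\tilde{Q}}$ decides membership of a child by estimating, via $\qtsize{\cdot}$, the sizes of the left-sibling subtrees along the root-to-$N$ path, and you speak of ``caching'' these estimates so the cost ``amortizes across the walk.'' This does not work inside a quantum tree search: the walk queries $\branch'_{\tilde{Q}}$ in superposition over nodes, and $\qtsize{\cdot}$ is a measurement-based routine, not a reversible oracle you can invoke coherently on a superposed input. There is no meaningful notion of caching a classical estimate across branches of a superposition, so the amortization argument collapses. What Ambainis and Kokainis actually do (and what the present paper's own $\qsubtree{\CA}{\cdot}$ generalizes) is to \emph{precompute}, before the quantum search begins, a small classical description of the boundary of $\CT_{\tilde{Q}}$. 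For a depth-first heuristic this boundary is a single root-to-leaf path: one descends from the root, at each of the $O(d)$ levels calling $\qtsize{\cdot}$ once to decide which child the boundary enters. This yields $O(d)$ estimations in total, each costing $\tilde{O}(\sqrt{\tilde{Q}\,d})$, for a precomputation cost of $\tilde{O}(\sqrt{\tilde{Q}}\,d^{3/2})$. The resulting $\branch'_{\tilde{Q}}$ then makes $O(1)$ calls to $\branch$ plus an $O(d)$-time classical comparison against the stored path, and can be queried coherently. With that fix the rest of your plan (doubling, union bound over $O(\log Q)$ rounds and $O(d)$ estimators per round) goes through and gives the stated bound.
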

In contrast, the best known worst-case quantum algorithm~\cite{apers2019walk} for tree search uses $\tilde{O}(\sqrt{T}d \log(1/\delta))$ queries to $\branch$. We note that \thm{universal-backtracking-bound} has a worse dependence on the depth $d$ (by a factor of $\sqrt{d}$). The better $d$ dependence cannot be obtained by a trivial modification of \thm{universal-backtracking-bound} and the question of whether such a dependence is possible is left open in~\cite{apers2019walk}. The universal speedup is also valid only for depth-first heuristics.

\subsection{Contributions of This Work}
\label{sec:results}
\paragraph{Branch-and-Bound.} Our main result is a universal quantum speedup over all classical \bnb{} algorithms that explore \bnb{} trees using Branch-Local Heuristics~(\defn{branch-local}). The quantum algorithm must depend on the heuristic used classically. We therefore introduce a meta-algorithm, $\iqbb$, that transforms a classical algorithm $\mathcal{A}$ into a quantum algorithm $\iqbb_{\mathcal{A}}$, with the following guarantee:

\begin{theorem}[Universal Speedup for Branch-and-Bound]
\label{thm:iqbb-main}
Suppose a classical algorithm $\mathcal{A}$ uses a Branch-Local heuristic $h$ to explore a \bnb{} tree $\CT$ rooted at $r$ and specified by oracles $\branch,\cost$ and returns a solution with $\cost$ at most $\epsilon$ greater than the optimum, using $Q(\epsilon)$ queries to $\branch$ and $\cost$. Then there exists a quantum algorithm $\iqbb_{\CA}$ that solves the problem to the same precision with probability at least $1- \delta$, using $\tilde{O}\left(\sqrt{Q(\epsilon)} d \log(c_{\max}h_{\max})\log^2\left(\frac{\log(T)}{\delta}\right)\right)$ queries to $\branch$ and $\cost$, where $d$ is the depth of $\CT$, $T$ is an upper bound on the size of $\CT$, $c_{\max}$ is an upper bound on the $\cost$ of any node, and $h_{\max}$ is an upper bound on the $\heur$ function associated with $h$ (see \defn{branch-local}).
\end{theorem}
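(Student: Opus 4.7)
The plan is to construct $\iqbb_{\CA}$ as an incremental procedure that, at each iteration, doubles a guess $\hat{Q}$ for the classical query complexity $Q(\epsilon)$. For a given $\hat{Q}$, we compute quantum estimates of the incumbent and best-bound values restricted to the set $S_{\hat{Q}}$ of the first $\hat{Q}$ nodes the classical algorithm $\CA$ would explore under heuristic $h$. When the quantum-certified gap between these two values drops below $\epsilon$, we output the incumbent. This parallels the outer-loop structure used by Ambainis–Kokainis in \thm{universal-backtracking-bound}, but generalized to cost-bearing BnB trees and heuristics beyond depth-first.

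\textbf{Implicit description of $S_{\hat{Q}}$.} The heart of the argument is to show that $S_{\hat{Q}}$ admits an implicit oracle description so that standard quantum tree-search primitives can be applied. Since $h$ is branch-local, $\heur(N)$ for any node $N$ is computable with $O(d)$ queries to $\branch,\cost$ along the root-to-$N$ path. Let $h^\ast$ denote the $\hat{Q}$-th smallest value of $\heur$ in $\CT$, the threshold separating explored from active nodes after $\hat{Q}$ classical iterations. Using a quantum tree-size estimation subroutine (e.g., the walk-based counting of Apers et al.) and binary search over the $O(\log h_{\max})$ dyadic buckets of $\heur$ values, we can approximate $h^\ast$ in $\tilde{O}(\sqrt{\hat{Q}}\, d \log h_{\max})$ queries. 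Membership of a node $N$ in $S_{\hat{Q}}$ then reduces to checking $\heur(N) \le h^\ast$, which is branch-local and costs $O(d)$ per evaluation.

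\textbf{Quantum incumbent and best-bound.} Given the implicit oracle for $S_{\hat{Q}}$, we run two quantum tree searches: one over the discovered leaves of $S_{\hat{Q}}$ to compute the incumbent, and one over the active frontier (children of nodes in $S_{\hat{Q}}$ that themselves are not in $S_{\hat{Q}}$, which remains a branch-local property) to compute the best-bound. Each minimum-finding pass combines an Apers-et-al. tree search over $S_{\hat{Q}}$ (cost $\tilde{O}(\sqrt{\hat{Q}}\, d)$) with a Dürr–Høyer-style binary search over the $O(\log c_{\max})$ dyadic buckets of $\cost$, yielding $\tilde{O}(\sqrt{\hat{Q}}\, d \log c_{\max})$ per stage. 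Correctness when the quantum gap first drops below $\epsilon$ follows because $S_{\hat{Q}}$ coincides exactly with the set $\CA$ would have explored, so by hypothesis this must happen by $\hat{Q} = Q(\epsilon)$ at the latest.

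\textbf{Complexity, error, and main obstacle.} The geometric doubling over at most $O(\log T)$ iterations is dominated by the last stage, giving the claimed $\tilde{O}(\sqrt{Q(\epsilon)}\, d \log(c_{\max} h_{\max}))$ query bound. A union bound over the $O(\log T)$ quantum subroutine calls, each amplified to success probability $1-\delta/\log T$ at polylogarithmic cost, contributes the $\log^2(\log(T)/\delta)$ factor. The main obstacle is the second step: making the implicit description of $S_{\hat{Q}}$ faithfully reproduce the classical execution order even when several nodes share the same $\heur$ value. This requires a branch-local tie-breaking rule (for instance, a canonical deterministic path encoding augmented into $\heur$) so that the quantum subroutines operate on exactly the set $\CA$ explores; verifying that the rule preserves branch-locality and interacts correctly with the quantum tree-size estimator is the most delicate part of the argument.
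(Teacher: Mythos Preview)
Your outer incremental loop, the geometric doubling, and the plan to extract the incumbent and best-bound via quantum minimum-leaf search over the partial subtree are all correct and match the paper's \algo{iqbb}. The gap is in your second paragraph: the claim that ``membership of a node $N$ in $S_{\hat Q}$ reduces to checking $\heur(N)\le h^\ast$'' is false for general branch-local heuristics, and this is not a tie-breaking issue.

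The problem is that $\heur$ need not be monotone along root-to-node paths, so the first $\hat Q$ nodes explored by $\CA$ are \emph{not} the $\hat Q$ nodes of smallest $\heur$ in $\CT$. Concretely, take a root with children $A,B$, where $A$ has child $C$, $C$ has child $D$, $B$ has child $E$, and $\heur$-values $0,1,100,2,50,3$ for root$,A,B,C,D,E$ respectively. The classical exploration order is root$,A,C,D,B$, so $S_5=\{\text{root},A,C,D,B\}$; but the five smallest $\heur$ values belong to $\{\text{root},A,C,E,D\}$. No single threshold $h^\ast$ recovers $S_5$: the sublevel set $\{N:\heur(N)\le h^\ast\}$ either omits $B$ or is not even a connected subtree (it would contain $E$ but not its parent $B$), so you cannot run a tree walk on it. Your approach is only valid when $\heur$ is monotone along paths (e.g.\ the pure cost-based heuristic in a \bnb{} tree), which is a strict special case of \defn{branch-local}.

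This is exactly why the paper's main technical work is \thm{subtree-main} (proved via \algo{cost-based-subtree}): instead of a single threshold, it maintains \emph{two} evolving thresholds $c_0,c_1$, one for each subtree rooted at a child of $r$, and alternates growing them so that the returned subtree $\twotrunc{r,\branch,\hcost,c_0,c_1}$ provably contains the first $2^m$ nodes explored by $\CA$ while having at most $4\cdot 2^m$ nodes total. The delicate part is not tie-breaking (handled trivially by \assum{heuristic-total-order}) but ensuring that the alternation between subtrees does not blow up the number of $\qtsize{\cdot}$ calls when the classical order oscillates rapidly between them; this is what the look-ahead thresholds $c'_0,c'_1$ and the invariants $\CI_1,\CI_2$ in the proof of \thm{subtree-main-local} are for.
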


\paragraph{Branch-and-Cut.}
We also show a construction of a universal speedup for \bnc{} by showing how cutting planes can be incorporated into $\iqbb{}{}$. We note that \emph{local} cutting planes can be easily incorporated into both Montanaro's algorithm~\cite{montanaro2020branch} as well as ours by a simple modification of the $\branch$ oracle. Our algorithm additionally admits the incorporation of \emph{global} cutting planes, i.e., cutting planes that are found at an interior node of the \bnb{} tree but are applicable also to nodes in other branches. Note that the addition of global cutting planes lead to the $\branch$ and $\cost$ oracles evolving through the execution. We denote the oracles after $j$ cutting planes have been found as $\branch^{(j)},\cost^{(j)}$. In reasonable problems the cost of all these oracles will be nearly the same so as to maintain the efficiency of solving the relaxed subproblems.

\begin{theorem}[Universal Speedup for \bnc{}]
\label{thm:iqbc-main}
Suppose a classical \bnc{} algorithm $\mathcal{A}$ uses a Branch-Local heuristic $h$, and up to $p$-global cutting planes, to explore a \bnb{} tree $\CT$ rooted at $r$ that is specified by oracles $\branch,\cost$, and returns a solution with $\cost$ at most $\epsilon$ greater than the optimum, using $Q(\epsilon)$ total queries to $\branch^{(j)},\cost^{(j)}$ for $1 \le j \le p$. Then there exists a quantum algorithm $\iqbc_{\CA}$ that solves the problem to the same precision with probability at least $1- \delta$, using $\tilde{O}\left(\sqrt{Q(\epsilon)} d p\log(c_{\max}h_{\max})\log^2\left(\frac{\log(T)}{\delta}\right)\right)$ queries to $\branch^{(j)},\cost^{(j)}$ for $1 \le j \le p$, and $\mathbf{cp}$, where $d$ is the depth of $\CT$, $T$ is an upper bound on the size of $\CT$, $c_{\max}$ is an upper bound on the $\cost$ of any node, $h_{\max}$ is an upper bound on the $\heur$ function associated with $h$ (see \defn{branch-local}), and $\cp$ is a function such that $\mathbf{cp}(n)=1$ if and only if a cutting plane if found at a given node.
\end{theorem}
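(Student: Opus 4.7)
The plan is to reduce the $\iqbc$ setting to repeated invocations of $\iqbb$ (established in \thm{iqbb-main}) interleaved with updates to the working oracles. The key observation is that the execution of the classical algorithm $\CA$ splits naturally into at most $p+1$ consecutive \emph{phases}, where phase $j$ consists of all the nodes explored while oracles $\branch^{(j)}, \cost^{(j)}$ are in force---that is, between the discovery of the $(j-1)$-th and $j$-th cutting planes. Within any single phase the oracles are frozen, so the portion of $\CA$'s execution inside that phase is a pure Branch-and-Bound run under a Branch-Local heuristic on the tree induced by $\branch^{(j)}, \cost^{(j)}$, augmented only with an early-termination rule that fires as soon as a discovered node $n$ satisfies $\cp(n) = 1$.

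For each phase $j$ I would define a classical sub-algorithm $\CA^{(j)}$ that replays the exploration of $\CA$ starting from its state at the beginning of the phase, using $\branch^{(j)}, \cost^{(j)}$, and halts as soon as it discovers a node $n$ with $\cp(n)=1$ (or when $\CA$ itself would terminate). Because the $\cp$ check is a local, node-wise test appended to $\CA$'s own exploration rule, $\CA^{(j)}$ still follows a Branch-Local heuristic and is therefore amenable to \thm{iqbb-main}. Invoking $\iqbb_{\CA^{(j)}}$ with failure probability $\delta/(p+1)$ yields a quantum procedure that uses $\tilde{O}\!\left(\sqrt{Q_j}\, d \log(c_{\max}h_{\max})\log^2\!\left(\tfrac{\log T}{\delta}\right)\right)$ queries to $\branch^{(j)}, \cost^{(j)}$, where $Q_j$ is the number of nodes $\CA$ explores in phase $j$; one additional classical call to $\cp$ on the returned node confirms the transition to the next phase and updates the global constraint set.

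Summing costs over the at most $p+1$ phases and applying the crude bound $Q_j \le Q(\epsilon)$ for every $j$ gives
\[
    \sum_{j=1}^{p+1}\sqrt{Q_j}\ \le\ (p+1)\,\sqrt{Q(\epsilon)},
\]
which matches the claimed total query count $\tilde{O}\!\left(\sqrt{Q(\epsilon)}\, d\, p\, \log(c_{\max}h_{\max})\log^2\!\left(\tfrac{\log T}{\delta}\right)\right)$, together with $O(p)$ additional $\cp$ queries. A union bound over the $p+1$ phases controls the overall failure probability by $\delta$, and correctness on the final output reduces to correctness of the last invocation of $\iqbb$, which is delivered by \thm{iqbb-main}.

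The hard part will be making the phase-by-phase simulation faithful to the true classical trajectory. In particular, $\iqbb_{\CA^{(j)}}$ must return the \emph{first} node (in $\CA$'s classical heuristic order) where $\cp$ fires, so that subsequent phases inherit exactly the oracle state $\CA$ would have had; if it returned an arbitrary cutting-plane node, the later oracles $\branch^{(j+1)}, \cost^{(j+1)}$ could drift from the true classical ones and the inductive reduction would break. Establishing this essentially amounts to verifying that the $\cp$-triggered halt can be folded into the Branch-Local heuristic without altering its ordering, and that the analysis underpinning \thm{iqbb-main} already produces the first halting node in that ordering rather than an unspecified one---this is the single technical hook that the formal proof must explicitly discharge.
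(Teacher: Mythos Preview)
Your approach differs from the paper's. The paper does not split the classical run into sequential phases and apply $\iqbb$ to each. Instead, it keeps the same incremental doubling loop as $\iqbb$ (over $m = 0,1,2,\dots$) and, at the end of each iteration $m$, calls $\qtsearch{\cdot}$ up to $p$ times on the current partial subtree to extract \emph{all} global cutting planes present there, then updates $\branch', \cost'$ for the next iteration. The factor $p$ in the complexity comes from these $p$ tree-search calls per level, not from $p$ separate $\iqbb$ runs. Correctness is argued by observing that once $m \ge \lceil \log Q(\epsilon) \rceil$ the partial subtree contains every node the classical algorithm visited, hence every cutting plane it discovered; all of them are therefore available at iteration $\lceil \log Q(\epsilon) \rceil + 1$, and the $\iqbb$ termination argument applies verbatim.

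The difficulty you flag in your own plan is real and is precisely what the paper's structure avoids. Neither $\iqbb$ nor $\iqts$ as stated promises to return the \emph{first} marked node in the heuristic order: $\iqts$ returns some marked node, and $\iqbb$ returns a near-optimal leaf, not a cutting-plane node at all---so invoking $\iqbb_{\CA^{(j)}}$ to locate the node where $\cp$ fires is already a type mismatch. Without the ``first marked node'' guarantee, your phase-$(j{+}1)$ oracles need not coincide with the classical ones, and the induction breaks exactly where you worried it would. The paper sidesteps this by never needing to know which cutting plane was found first: it simply harvests all of them at each scale. Your decomposition could in principle be made to work, but it would require an additional primitive---a version of $\iqts$ that returns the heuristically earliest marked node---which the paper neither states nor proves.
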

Observe that \thm{iqbc-main} introduces an additional dependence on $p$ and so the algorithm may not have an advantage when more than an exponential number of global cutting planes are used. Note however, that adding constraints to the relaxed problem makes the runtime at each node longer. The \bnc{} heuristic relies on the solution of each node being efficient, therefore in practice only $\poly(n)$ cutting planes are used. 

Most commercial MIP solvers use routines based on \bnb{} and \bnc{} for which our techniques obtain universal speedup. Thus, our algorithms obtain quantum speedups for any problem where the classical state of the art relies on these routines. In \sec{experiments}, we investigate numerically the degree of speedup we can expect to obtain for three such problems: the Sherrington-Kirkpatrick model, Maximum Independent Set, and Portfolio Optimization with diversification constraints. Another significant example is the problem of finding Low-Autocorrelation Binary Sequences~\cite{Packebusch2016}, where the classical state of the art uses \bnb{} with a cost-based heuristic, and obtains performance that is empirically observed to scale as $O(1.73^n)$, where $n$ is the problem size. Classically the problem has been solved up to $n = 66$. Therefore, $\iqbb$, which obtains a universal near-quadratic advantage over classical \bnb{}, could eventually enable solving problems of sizes  $\sim 100$ or beyond.

\paragraph{Heuristic Tree Search} Finally, our results also yield new algorithms for quantum tree search using Branch-Local search heuristics. A tree search problem is specified by a $\branch$ oracle and a marking function $\f$.
\begin{theorem}[Universal Speedup for Heuristic Tree Search]
\label{thm:iqts-main}
Consider a tree $\CT$ rooted at $r$ that is specified by a $\branch$ oracle, and a marking function $\f$ that marks at least one node.
Suppose a classical algorithm $\mathcal{A}$ uses a Branch-Local heuristic $h$ and returns a marked node $n$ such that $\f(n) =1$, using $Q(\epsilon)$ queries to $\branch$ and $\f$. Then there exists a quantum algorithm $\iqts_{\CA}$ that offers the same guarantee with probability at least $1 -\delta$, using $\tilde{O}\left(\sqrt{Q(\epsilon)} d \log(h_{\max})\log^2\left(\frac{\log(T)}{\delta}\right)\right)$ queries to $\branch$ and $\f$, where $d$ is the depth of $\CT$, $T$ is an upper bound on the size of $\CT$, and $h_{\max}$ is an upper bound on the $\heur$ function associated with $h$ (see \defn{branch-local}).
\end{theorem}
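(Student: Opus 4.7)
The plan is to construct $\iqts_\CA$ by adapting the $\iqbb_\CA$ meta-algorithm from \thm{iqbb-main}. The key observation is that heuristic tree search is structurally a simpler instance of the \bnb{} problem: there is no cost function to minimize or track, only a marking function $\f$ to check, while the Branch-Local heuristic $h$ still plays the same ordering role as in $\iqbb$. Setting $\cost$ to a trivial constant inside $\iqbb$ would eliminate both the cost-truncation subroutines and the $\log(c_{\max})$ factor in the complexity, matching exactly the bound in the statement.

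The algorithm would proceed in stages indexed by $i = 0, 1, 2, \ldots$, with the goal at stage $i$ of simulating the exploration of the first $2^i$ nodes that $\CA$ would visit in the order dictated by $h$, and checking whether any of them is marked by $\f$. To do this, I would first use a quantum tree-size estimation subroutine combined with a binary search over $[0, h_{\max}]$ to identify a threshold $\tau_i$ such that the number of nodes with $\heur \le \tau_i$ is approximately $2^i$; this binary search is what contributes the $\log(h_{\max})$ factor. I would then invoke a quantum tree-search subroutine (e.g.\ the walk-based algorithm of~\cite{apers2019walk}) over the tree $\CT$, using the combined marking condition ``$\f(n) = 1$ and $\heur(n) \le \tau_i$''. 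If a marked node is found, return it; otherwise, double the budget and continue to stage $i+1$.

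The main subtlety, and the step I expect to be most delicate, is handling the fact that a Branch-Local heuristic need not be monotone along root-to-leaf paths, so the set of the first $2^i$ classically explored nodes is not literally the subtree cut out by $\heur \le \tau_i$. I would address this exactly as $\iqbb$ does: exploit the Branch-Local structure of \defn{branch-local} to compute $\heur(n)$ from a constant number of queries to $\branch$ along the path from the root during the tree walk, so that the quantum walk can correctly recognize the set it is supposed to search without requiring monotonicity. Correctness at stage $i$ then reduces to showing that once $2^i \ge Q(\epsilon)$, the threshold $\tau_i$ produced by the estimation subroutine is large enough that the marked node returned by $\CA$ is contained in the set $\{n : \heur(n) \le \tau_i\}$, which follows directly from the classical ordering.

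Finally, the complexity analysis follows a standard doubling-budget argument. Each stage $i$ uses $\tilde{O}(\sqrt{2^i}\,d\,\log(h_{\max}))$ queries after amplifying success probability to $1 - \delta / O(\log T)$ (contributing the $\log^2(\log(T)/\delta)$ factor), and the geometric sum over $i \le \log Q(\epsilon)$ is dominated by the last stage. Combining these factors yields the claimed bound $\tilde{O}(\sqrt{Q(\epsilon)}\,d\,\log(h_{\max})\,\log^2(\log(T)/\delta))$, with the $\log(c_{\max})$ factor of \thm{iqbb-main} absent because no cost oracle is ever queried.
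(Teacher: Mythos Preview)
Your high-level framework matches the paper's exactly: an outer loop over $m=0,1,2,\dots$, at each stage producing a sub-tree that is guaranteed to contain the first $2^m$ nodes $\CA$ would explore, then running $\qtsearch{\cdot}$ on that sub-tree with the marking function $\f$, and summing the geometric series. The complexity accounting and the failure-probability bookkeeping you sketch are also essentially what the paper does.

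The gap is in how you produce the sub-tree. You propose a single threshold $\tau_i$ found by binary search so that $|\{n : \heur(n) \le \tau_i\}| \approx 2^i$, and then you search ``over the tree $\CT$'' with the compound marking $\f(n)=1 \wedge \heur(n)\le\tau_i$. This does not work, for two related reasons. First, for a general Branch-Local heuristic $\heur$ is not monotone along root-to-leaf paths, so $\{n:\heur(n)\le\tau_i\}$ is not a sub-tree and cannot be presented to $\qtsearch{\cdot}$ via a modified $\branch$ oracle; if instead you literally search all of $\CT$ you pay $\tilde O(\sqrt{T}d)$ per stage, not $\tilde O(\sqrt{2^i}d)$. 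Second, and more fundamentally, the first $2^i$ nodes $\CA$ explores are \emph{not} the $2^i$ nodes of smallest $\heur$: a node with very small $\heur$ can be hidden behind a parent with large $\heur$ and is not discovered until that parent is explored. So even if you could search $\{n:\heur(n)\le\tau_i\}$ cheaply, it is the wrong set. Your remark that one can ``compute $\heur(n)$ along the path during the walk'' addresses only how to evaluate $\heur$ (this is indeed the content of the paper's reduction to a local $\hcost$, \lem{heuristic-reduction}), not the mismatch between the threshold set and the classically explored set.

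The paper's resolution is the partial sub-tree estimation routine $\qsubtree{\CA}{\cdot}$ (\thm{subtree-main}), which is substantially more involved than a single threshold. After reducing to a local $\hcost$, it maintains \emph{two} thresholds $c_0,c_1$, one for each sub-tree rooted at a child of the root, and alternates growing them with a look-ahead mechanism (the $c'_0,c'_1$ values) to control how many ``invalid'' nodes---nodes included out of the classical order---can accumulate (\algo{cost-based-subtree}). This is precisely what is needed to handle the non-monotonicity you flagged, and it is the main technical contribution behind the theorem; a single global threshold cannot substitute for it.
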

\thm{iqts-main} extends the universal speedup for backtracking~\cite{ambainis2017tree} to many other search heuristic. For the particular case of backtracking, \thm{iqts-main} yields an algorithm for backtracking with better depth dependence than~\thm{universal-backtracking-bound}. For depth-first heuristics, it suffices to have $\heur = O(Q(\epsilon))$. Thus we have an asymptotically faster backtracking algorithm whenever $d = \omega(\log(Q))$.

\section{Proofs of Results}
\label{sec:technical-proofs}
In this section, we prove our primary theoretical results. We structure the section as follows: In \sec{quantum-prelim}, we describe quantum subroutines from existing literature that we use to construct our algorithms. \sec{incremental-quantum} then introduces the framework within which we prove all our main results, and reduces the proof to a subroutine that efficiently estimates subtrees corresponding to the first nodes explored by classical algorithms. Assuming a guarantee~(\thm{subtree-main}) on the correctness and efficiency of this procedure, we prove \thm{iqbb-main},\thm{iqbc-main}, and \thm{iqts-main}, in \sec{iqbb}, \sec{iqbc}, and \sec{iqts}, respectively. Finally in \sec{partial-subtree}, we prove our main technical ingredient in~\thm{subtree-main}, thereby completing the proofs.

\subsection{Quantum Preliminaries}
\label{sec:quantum-prelim}

Our results are based on quantum random walks on tree graphs. However, we shall not directly analyze  quantum walks but will instead present algorithms, algorithm using existing algorithmic primitives as subroutines. We present the primitives below with their associated claims and guarantees.  This will allow us to avoid a direct discussion of the implementations of these subroutines and preliminaries on quantum algorithms.

   \begin{primitive}
   \label{prim:qtsearch}[Quantum Tree Search]
   Suppose we are given a tree $\mathcal{T}$ with $T$ nodes, depth $d$, specified by a branching oracle $\branch{}$, and a marking function $\f$ on the nodes. Let $M = \lbrace x | x \in \CT, \f(x) = 1 \rbrace$ be the set of marked nodes and $m = |M|$. A quantum tree search algorithm $\qtsearch{\cdot}$ is such that $\qtsearch{\branch,d,T,\f,\delta}$  and with failure probability at most $\delta$ returns a member of $M$ if it is not empty, and otherwise returns ``not found''. There are two versions of $\qtsearch{\cdot}$ that both offer the above guarantee.
   \begin{enumerate}
       \item The first, due to Apers, et al.~\cite{apers2019walk} makes $O(\sqrt{T}d\log(d)\log\left(1/\delta)\right)$  queries to $\branch$ and $\f$. This is an improvement of an earlier algorithm due to Montanaro~\cite{montanaro2018backtracking} makes $O(\sqrt{T}d^{3/2}\log(d)\log\left(1/\delta)\right)$  queries to $\branch$ and $\f$.
       \item The second, due to  Jarret and Wan~\cite{jarret2018improved} makes $O\left(\sqrt{Td}\log^4(md)\log(m/\delta)\right)$ queries to $\branch$ and $\f$. This version of the algorithm is more efficient when the number of nodes $m$ is a polynomial function of $d$.
   \end{enumerate}
   \end{primitive}

\begin{primitive}[Quantum Tree Size Estimation]
\label{prim:qtsize}
Suppose we are given a tree $\CT$ specified by an oracle $\branch{}$, which has depth at most $d$, $T$ nodes and maximum degree $\degree = O(1)$, and parameters $T_0$, relative error $\epsilon < 1$, and failure probability $\delta$. There exists a quantum algorithm  $\qtsize{\branch,d,T_0,\delta,\epsilon}$ due to Ambainis and Kokainis~\cite{ambainis2017tree} that makes $O\left(\frac{\sqrt{T_0 d}}{\epsilon^{3/2}}\log^2(1/\delta)\right)$  queries to $\branch$ and
   \begin{enumerate}
       \item If $T \le T_0/(1 + \epsilon)$, returns $\tilde{T} \le T_0$ such that $|\tilde{T} - T| \le \epsilon T$.
       \item If $T \ge T_0(1 + \epsilon)$, outputs ``more than $T_0$ nodes''.
   \end{enumerate}
   We shall use the algorithm instead with one sided error. The error in the output promise can be made one-sided querying the algorithm with $T_0' = T_0(1 + \epsilon)$. We also make the relative error (in case an estimate is returned) one  sided as follows:  the original algorithm returns an estimate $\tilde{T}$ such that $(1 - \epsilon) T \le T/(1 + \epsilon) \le \tilde{T} \le (1 + \epsilon) T$. The modified algorithm instead returns $\tilde{T}(1 + \epsilon)$.
   We therefore have a modified version such that $\qtsize{\branch,d,T_0,\delta,\epsilon}$, uses the same number of queries as before and
   \begin{enumerate}
       \item If $T \le T_0$, returns $\tilde{T} \le T_0(1 + \epsilon)$ such that $T \le \tilde{T} \le T(1 + \epsilon)^2$.
       \item If $T \ge T_0(1 + \epsilon)^{2}$, output ``more than $T_0$ nodes''.
   \end{enumerate}
\end{primitive}

We shall also use a quantum routine to find the minimum cost leaf in a tree, given access to a $\branch{}$ oracle for the tree and a $\cost{}$ oracle for the nodes. We observe here that the \bnb{} algorithm effectively finds the minimum leaf in a tree. We can therefore use a minor variant of Montanaro's algorithm for \bnb{} for this routine (Algorithm 2 from \cite{montanaro2020branch}) using the version of $\qtsearch{\cdot}$ in \prim{qtsearch} and the two-sided version of $\qtsize{\cdot}$ in \prim{qtsize}. The guarantee follows from~\cite[Theorem 1]{montanaro2020branch}.

\begin{primitive}[Quantum Minimum Leaf]
\label{prim:qtminleaf}
Given a \bnb{} tree $\CT$ specified by a $\branch$ oracle and a $\cost$ oracle on the nodes, such that the tree has $T$ nodes, depth $d$, and that the maximum cost of any node in the tree is $c_{\max}$, there exists an algorithm $\qtminleaf{\branch,\cost,d,c_{\max},T,\delta}$ that makes $\tilde{O}\left(\sqrt{T}d\log(c_{\max})\log^{2}\left(\frac{1}{\delta}\right) \right)$ oracle calls to $\branch$ and $\cost$ and with failure probability at most $\delta$ returns the minimum leaf in the tree.
\end{primitive}

\subsection{Incremental Tree Search Framework}
\label{sec:incremental-quantum}
Our main results (see \sec{results}) are all established within the same framework, inspired by the method used for incremental tree search by Ambainis and Kokainis~\cite{ambainis2017tree}. Consider any algorithm that incrementally explores a tree using a heuristic. At any point, the nodes that have been explored form a sub-tree of the original. We define the following problem which we call Quantum Partial Subtree Estimation
\begin{definition}[Quantum Partial Subtree Estimation]
\label{defn:subtree-estimation}
Given a tree $\CT$ rooted at $r$ and specified by a $\branch$ oracle, a classical algorithm $\CA$ that explores it using heuristic $h$ such that $h_{\max}$ is the maximum value for any node, of $\heur$ used by $h$ for ranking (\defn{branch-local}), and a positive integer $m$, a \emph{Quantum Partial Subtree Estimation algorithm} is any algorithm $\qsubtree{\CA}{r,\branch,d,h_{\max}, 2^m,\delta}$ that returns with probability at least $1 - \delta$, a value $x$ that can be used to define a new oracle $\branch'$ satisfying the following guarantees:
\begin{enumerate}
    \item $\branch'$ is a function of $x$ and $\branch$.
    \item $\branch'$ makes a constant number of queries to $\branch$.
    \item The sub-tree corresponding to $\branch$ has $k2^{m}$ nodes.
    \item The sub-tree corresponding to $\branch$ contains the first $2^{m}$ nodes visited by $\mathcal{A}$.
\end{enumerate}
\end{definition}
Ambainis and Kokainis~\cite{ambainis2017tree} developed a Quantum Partial Subtree Estimation for depth first heuristics that uses $\tilde{O}(\sqrt{2^m} d^{3/2})$ queries to $\branch$ and use it to obtain a universal speedup for backtracking. Our main technical ingredient will be to develop a Quantum Partial Subtree Estimation for any Branch-Local Heuristic. We will show the following theorem
\begin{theorem}[General Quantum Subtree Estimation]
\label{thm:subtree-main}
Consider a tree $\CT$ of depth $d$ rooted at $r$ and specified by a $\branch$ oracle. Let $\CA$ be a classical algorithm that searches $\CT$ using Branch-Local search heuristic $h$. Let $h_{\max}$ be an upper bound on the $\heur$ function used by $h$ for ranking, and $m$ a positive integer. There exists an algorithm $\qsubtree{\CA}{r,\branch,d,h_{\max},2^{m},\delta}$ that makes $\tilde{O}\left(d\sqrt{2^m}\log(h_{\max})\log^2\left(\frac{1}{\delta}\right)\right)$ queries and with probability at least $1 - \delta$. satisfies the conditions of \defn{subtree-estimation} with $k=4$. 
\end{theorem}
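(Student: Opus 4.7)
The approach is to reduce the problem to a binary search for a threshold value $x$ over a ``bottleneck'' heuristic statistic, combined with the quantum tree-size estimation primitive \prim{qtsize}. For each node $N \in \CT$, define $\pi(N) := \max \{ \heur(M) : M \text{ is on the path from } r \text{ to } N \}$, and for each $x$ let $\CT_x := \{ N \in \CT : \pi(N) \le x\}$. The central observation is that, for any Branch-Local heuristic, the classical algorithm $\CA$ explores nodes in non-decreasing order of $\pi$: to explore $N$, $\CA$ must first have popped every ancestor from its priority queue, and the latest such ancestor is precisely the one realizing $\pi(N)$. Consequently, choosing a threshold so that $|\CT_x| \ge 2^m$ forces $\CT_x$ to contain the first $2^m$ nodes visited by $\CA$ (up to tiebreaking on the boundary $\{\pi = x\}$).

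Next I would implement a thresholded branch oracle $\branch_x$: on input $N$, call $\branch(N)$ to get its children, compute $\heur$ and hence $\pi$ for each child using the $\hlocal$ query at the child together with the $\hparent$ values already stored in the node's augmented representation (one value per ancestor along the path), and return only those children with $\pi \le x$. Since the node representation carries an $O(d)$-size auxiliary register of ancestor $\hparent$ values, each call to $\branch_x$ makes only $O(1)$ queries to $\branch$ and to the constant-query primitives $\hlocal,\hparent$ guaranteed by \defn{branch-local}. With $\branch_x$ in hand, I would run an outer binary search over $x \in \{0,1,\dots,h_{\max}\}$, invoking $\qtsize{\branch_x, d, 4\cdot 2^m, \delta', \tfrac14}$ with $\delta' = \delta / \Theta(\log h_{\max})$ at each step to decide whether $|\CT_x| < 2^m$ or $|\CT_x| > 4\cdot 2^m$. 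After $O(\log h_{\max})$ iterations I either locate some $x^*$ with $|\CT_{x^*}| \in [2^m, 4\cdot 2^m]$ and output $x = x^*$, or I detect a pair $x^*-1, x^*$ with $|\CT_{x^*-1}| < 2^m$ and $|\CT_{x^*}| > 4\cdot 2^m$. The latter case reflects a large ``tie'' at $\pi = x^*$, which I would resolve by a secondary binary search: fix a tie-breaking rule (e.g.\ path-lexicographic order) on the boundary nodes $\{N : \pi(N) = x^*\}$, and search for the right prefix of them to adjoin to $\CT_{x^*-1}$, again using $\qtsize$ to estimate the size at each candidate cutoff.

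Each $\qtsize$ call costs $\tilde{O}(\sqrt{2^m d}\,\log^2(1/\delta'))$ queries to $\branch_x$ and hence to $\branch$; multiplying by the $O(\log h_{\max})$ binary-search iterations and union-bounding the failure probability yields the claimed complexity, using $\sqrt{2^m d} \le d\sqrt{2^m}$. The main obstacle I anticipate is formalizing the $\pi$-based characterization of exploration order: one must prove that, under a carefully chosen tie-breaking discipline on $\{N : \pi(N) = x^*\}$, the final sub-tree coincides with a prefix of $\CA$'s exploration order, so that both containment of the first $2^m$ explored nodes and the upper bound $|\CT'| \le 4\cdot 2^m$ hold simultaneously. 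A secondary technical point is verifying that the augmented node representation, together with path-amortized computation of $\heur$, keeps the per-call cost of $\branch_x$ to $O(1)$ queries to $\branch$, which is what prevents any extra $d$ factors from entering the final bound.
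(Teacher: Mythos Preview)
Your central structural observation is correct and elegant: for any Branch-Local heuristic, the classical algorithm explores nodes in non-decreasing order of $\pi(N)=\max\{\heur(M):M\text{ on the path }r\to N\}$, so every $\CT_x$ is a genuine prefix of the exploration order. This is essentially the same reduction the paper carries out via \lem{heuristic-reduction} (augmenting nodes with path data so the heuristic becomes a local $\hcost$), and your thresholded oracle $\branch_x$ coincides with the paper's $\trunc{\cdot}$ transform. Up to this point the two approaches agree.

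The genuine gap is your tie-breaking step. When $|\CT_{x^*-1}|<2^m$ and $|\CT_{x^*}|>4\cdot 2^m$, the boundary set $\{N:\pi(N)=x^*\}$ is a subtree rooted at the unique node $P$ with $\heur(P)=x^*$, and \emph{within that subtree $\CA$ again explores by $\heur$-priority}, i.e.\ it is a recursive instance of the same problem. A path-lexicographic prefix of this set has no relation to $\CA$'s exploration order (and is not even a subtree in general), so the resulting set need not contain the first $2^m$ visited nodes. You flag this as ``the main obstacle'' but the proposed fix does not work. A correct resolution via recursion on the level set can have depth up to $d$, which multiplies the number of $\qtsize{\cdot}$ calls by $d$ and gives $\tilde O(d^{3/2}\sqrt{2^m}\log h_{\max})$ rather than the claimed $\tilde O(d\sqrt{2^m}\log h_{\max})$.

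The paper sidesteps this obstacle by a different mechanism. Instead of insisting on an exact prefix, \algo{cost-based-subtree} maintains \emph{two} thresholds $c_0,c_1$ (one per child-subtree of the root) and alternates between growing them, doubling the target count at each switch. The returned set is allowed to contain ``invalid'' nodes (nodes not in the prefix) provided at least a $1/2(1+\epsilon)^4$ fraction are valid; since the size is pushed to roughly $2^{m+1}$, this guarantees the first $2^m$ explored nodes are present. The geometric doubling is what keeps the number of calls to $\qtsize{\cdot}$, $\kthcost{\cdot}$, $\nextcost{\cdot}$ at $O(m)$ rather than $O(d\log h_{\max})$. That trade-off---tolerating a bounded fraction of extraneous nodes in exchange for avoiding the recursive level-set refinement---is the idea your sketch is missing.
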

We defer the proof of \thm{subtree-main} to \sec{partial-subtree}. First, we show how our main results may be obtained as a consequence. We will start with heuristic tree search as it is the simplest demonstration of the incremental search framework.

\subsection{Universal Speedup for Heuristic Tree Search}
\label{sec:iqts}
\begin{algorithm}[!htbp]
\DontPrintSemicolon
\KwIn{$\branch$ oracle of tree $\CT$ of depth $d$ rooted at $r$, marking function $\f$, classical algorithm $\CA$ that explores the tree using heuristic $h$, upper bound $h_{\max}$ on heuristic, upper bound $T$ on size of tree, failure probability $\delta$}
\KwResult{$\iqts_{\mathcal{A}}(r,\branch,d,\f,h_{\max},T,\delta)$ returns node $n \in \CT$ such that $\f(n)=1$.}
Set $m = 0$.\;
\While{$\top$}
{
    Use $\qsubtree{\CA}{r,\branch,d,h_{\max},2^m,\delta/4\log{T}}$ to obtain oracle $\branch_1$.\;
    \If{$\qtsearch{\branch_1,d,4\cdot2^{m},\f,\delta/4\log{T}}$ returns node n}
    {
        \KwRet $n$. \;
    }
        $m \gets m+1$. \;
}
\caption{Incremental Quantum Tree Search}
\label{algo:iqts}
\end{algorithm}

\begin{proof}[\textbf{Proof of \thm{iqts-main}.} ] We show that $\iqts_{\CA}$ defined in \algo{iqts} satisfies the conditions of \thm{iqts-main}. 
First, we ignore failure probabilities. Suppose $\CA$ finds a marked node after $Q$ queries. Consider the loop iteration where $m_Q = \lceil \log(Q) \rceil$. Since the sub-tree corresponding to $\branch',\cost'$ contains at least the first $Q$ nodes explored by $\CA$ and by assumption $\CA$ finds a marked node after exploring at most $Q$ nodes, $\qtsearch{\cdot}$ in this iteration returns a marked node. The algorithm thus terminates after at most $m_Q = \lceil \log(Q) \rceil$ iterations.

At each $m$ we make $\tilde{O}\left(d\sqrt{2^m}\log(h_{\max})\log^2\left(\frac{\log{T}}{\delta}\right)\right)$ queries for $\qsubtree{\CA}{\cdot}$ and $\tilde{O}\left(d\sqrt{4 \cdot 2^m}\log^2\left(\frac{\log{T}}{\delta}\right)\right)$ queries for $\qtsearch{\cdot}$; leading to a total of $\tilde{O}\left(d\sqrt{2^m}\log(h_{\max})\log^2\left(\frac{\log{T}}{\delta}\right)\right)$. Across iteration the number of queries made is therefore,
\begin{equation}
    \tilde{O}\left(\left(d\log(h_{\max})\log^2\left(\frac{\log{T}}{\delta}\right)\right)\sum_{m=0}^{m_Q} \sqrt{2^m}\right) = \tilde{O}\left(d\sqrt{2^{m_Q}}\log(h_{\max})\log^2\left(\frac{\log{T}}{\delta}\right)\right)
\end{equation}
where the equality follows from the formula for the sum of geometric series. Finally, note that there are at most $2m_Q$ possibilities for failure, each with probability $< \delta/4\log(T)$. Thus the total failure probability is at most $\delta$, and the proof is complete.
\end{proof}

\subsection{Universal Speedup for \bnb{}}
\label{sec:iqbb}
\begin{algorithm}[!htbp]
\DontPrintSemicolon
\KwIn{$\branch$ and $\cost$ oracles of $\bnb{}$ tree $\CT$ of depth $d$ rooted at $r$, classical algorithm $\CA$ that explores the tree using heuristic $h$, upper bound $h_{\max}$ on heuristic, upper bound $T$ on size of tree, failure probability $\delta$, precision parameter $\epsilon$}
\KwResult{$\iqbb_{\mathcal{A}}(r,\branch,\cost,d,c_{\max},h_{\max},T,\delta,\epsilon)$ returns a leaf of $\CT$ with $\cost$ no more than $\epsilon$ greater than the minimum leaf.}
Set $m = 0$.\;
\While{$\top$}
{
    Use $\qsubtree{\CA}{r,\branch,d,h_{\max},2^m,\delta/5\log{T}}$ to define oracle $\branch_1$.\;
    $\mathit{bound}_1 = \cost(\qtminleaf{\branch_1,\cost,d,c_{\max},4\cdot2^m,\delta/5\log{T}})$.\;
    Define $\cost_1$ as follows: for a node $n$, if $\branch_1(n)$ is empty but $\branch(n)$ is not empty, $\cost_1(n) = +\infty$. Otherwise $\cost_1(n) = \cost(n)$.\;
    $\mathit{incumbent} = \qtminleaf{\branch_1,\cost_1,d,c_{\max},4\cdot2^m,\delta/5\log{T}})$.\;
    Define $\branch_2$ as follows: for a node $n$, if $\branch_1(n) \ne \branch(n)$, $\branch_2(n)$ is empty. Otherwise $\branch_2(n) = \branch(n)$.\;
    $\mathit{bound}_2 = \cost(\qtminleaf{\branch_2,\cost,d,c_{\max},4\cdot2^m,\delta/5\log{T}})$.\;
    $\textit{best-bound} = \min(\mathit{bound}_1,\mathit{bound}_2)$. \;
    \If{$\cost(\mathit{incumbent}) \le \textit{best-bound} + \epsilon$} 
    {
        Return $\mathit{incumbent}$.\;
    }
}
\caption{\iqbbtext{}}
\label{algo:iqbb}
\end{algorithm}

\begin{proof}[\textbf{Proof of \thm{iqbb-main}.} ] We show that $\iqbb_{\CA}$ defined in \algo{iqbb} satisfies the conditions of \thm{iqts-main}. 
First, we ignore failure probabilities. Suppose after $Q(\epsilon)$ nodes that $\CA$ has explored enough nodes so that the gap is below $\epsilon$. Consider the loop iteration where $m_Q = \lceil \log(Q(\epsilon)) \rceil$. The sub-tree $\CT_1$ corresponding to $\branch_1$ contains at least the first $Q(\epsilon)$ nodes explored by $\CA$. \algo{iqbb} computes the $\incumbent$ as the minimum \emph{feasible} leaf of $\CT_{1}$ (as the cost of non feasible leaves of $\CT_1$ is set to $+\infty$ in $\cost_1$). Any discovered feasible solution in the first $Q(\epsilon)$ nodes is also in $\CT_1$ and so the estimate of $\incumbent$ has \emph{lower} cost than that found by $\CA$ after $Q(\epsilon)$ nodes.

The active nodes in $\CT_1$ fall into two groups:
\begin{enumerate}
    \item Nodes where none of their children is in $\CT_1$. These are leaves of $\CT_1$ and $\mathit{bound}_1$ finds the minimum $\cost$ among these.
    \item Nodes where only some of their children are in $\CT_1$. $\branch_2$ prunes $\CT_1$ to make these leaves and $\mathit{bound}_2$ computes the minimum among these and some active nodes of the first type.
\end{enumerate}
$\bestbound$ is therefore by construction the minimum $\cost$ of all the active nodes in $\CT_{1}$. The active nodes found by $\CA$ after exploring $Q(\epsilon)$ nodes are either interior nodes now (and the leaves descended from them have greater cost) or are still active nodes. The the computed $\bestbound$ is \emph{greater} than that found after $Q(\epsilon)$ nodes explored by $\CA$, which was $\epsilon$ by assumption.
The algorithm thus terminates after at most $m_Q = \lceil \log(Q(\epsilon) \rceil$ iterations.

At each iteration $m$, we make $\tilde{O}\left(d\sqrt{2^m}\log(h_{\max})\log^2\left(\frac{\log{T}}{\delta}\right)\right)$ queries for $\qsubtree{\mathcal{A}}{\cdot}$ and an additional $\tilde{O}\left(d\sqrt{2^m}\log(c_{\max})\log^2\left(\frac{\log{T}}{\delta}\right)\right)$ queries for $\qtminleaf{\cdot}$, leading to a total query complexity of $\tilde{O}\left(d\sqrt{2^m}\log(h_{\max}c_{\max})\log^2\left(\frac{\log{T}}{\delta}\right)\right)$. Across iterations, the total number of queries made is, therefore,
\begin{equation}
    \tilde{O}\left(\left(d\log(h_{\max}c_{\max})\log^2\left(\frac{\log{T}}{\delta}\right)\right)\sum_{m=0}^{m_Q} \sqrt{2^m}\right) = \tilde{O}\left(d\sqrt{2^{m_Q}}\log(h_{\max}c_{\max})\log^2\left(\frac{\log{T}}{\delta}\right)\right),
\end{equation}
where the equality follows from the formula for the sum of geometric series. Finally, note that there are at most $4m_Q$ possibilities for failure, each with probability $< \delta/5\log(T)$. Thus the total failure probability is at most $\delta$, and the proof is complete.
\end{proof}

\subsection{Universal Speedup for Branch-and-Cut}
\label{sec:iqbc}
\begin{algorithm}[!htbp]
\DontPrintSemicolon
\KwIn{$\branch$ and $\cost$ oracles of $\bnb{}$ tree $\CT$ of depth $d$ rooted at $r$, classical algorithm $\CA$ that explores the tree using heuristic $h$, upper bound $h_{\max}$ on heuristic, upper bound $T$ on size of tree,  upper bound $p$ on number of cutting planes discovered, failure probability $\delta$, precision parameter $\epsilon$, function $\cp$ that indicates if global cutting plane is found at a node}
\KwResult{$\iqbb_{\mathcal{A}}(r,\branch,\cost,\cp,d,c_{\max},h_{\max},p,T,\delta,\epsilon)$ returns a leaf of $\CT$ with $\cost$ no more than $\epsilon$ greater than the minimum leaf.}
Set $m = 0$.\;
Set $\branch',\cost' = \branch,\cost$.\;
\While{$\top$}
{
    Use $\qsubtree{\mathcal{A}}{r,\branch',d,h_{\max},2^m,\delta/4\log{T}}$ to define oracle $\branch_1$.\;
    $\mathit{bound}_1 = \cost'(\qtminleaf{\branch_1,\cost,d,c_{\max},4\cdot2^m,\delta/5\log{T}})$.\;
    Define $\cost_1$ as follows: for a node $n$, if $\branch_1(n)$ is empty but $\branch'(n)$ is not empty, $\cost_1(n) = +\infty$, otherwise $\cost_1(n) = \cost'(n)$.\;
    $\mathit{incumbent} = \qtminleaf{\branch_1,\cost_1,d,c_{\max},4\cdot2^m,\delta/5\log{T}})$.\;
    Define $\branch_2$ as follows: for a node $n$, if $\branch_1(n) \ne \branch(n)$, $\branch_2(n)$ is empty, otherwise $\branch_2(n) = \branch_1(n)$.\;
    $\mathit{bound}_2 = \cost(\qtminleaf{\branch_2,\cost,d,c_{\max},4\cdot2^m,\delta/5\log{T}})$.\;
    $\textit{best-bound} = \min(\mathit{bound}_1,\mathit{bound}_2)$. \;
    \If{$\cost(\mathit{incumbent}) \le \textit{best-bound} + \epsilon$} 
    {
        Return $\mathit{incumbent}$.\;
    }
    Use $\qtsearch{\branch_1,d, 4\cdot2^{m},\cp,\delta/5p\log{T}}$ $p$ times to find up to $p$ different global cutting planes.\;
    Redefine $\branch',\cost'$ to also include cutting planes in the definition of the internal relaxed problem.
}
\caption{Incremental Quantum \bnc}
\label{algo:iqbc}
\end{algorithm}

\begin{proof}[\textbf{Proof of \thm{iqbc-main}.}]
We focus on global cutting planes as local cutting planes can be trivially added to \algo{iqbb} (or the algorithm of Montanaro~\cite{montanaro2018backtracking}) by modifying the branch oracle so that cutting planes discovered at a node are added to its children). The proof of the query complexity and the failure probability follows easily from that of \thm{iqbb-main}. Suppose that the algorithm terminates after $Q(\epsilon)$ iterations, using $p'$ cutting planes that were discovered until that point. It is clear that after $\lceil \log(Q(\epsilon)) \rceil$ iterations, the $p'$ cutting planes will be discovered. These cutting planes are therefore available at iteration $\lceil \log(Q(\epsilon)) \rceil + 1$ and an approximately optimal solution is obtained (see the proof of \thm{iqbb-main}).
\end{proof}

\subsection{Partial Subtree Generation}
\label{sec:partial-subtree}
In this section we describe the proof of our main technical ingredient \thm{subtree-main}. Our first observation is that in order to describe a partial subtree generation procedure for a general Branch-Local heuristic it suffices to consider \emph{local} heuristics which rank active nodes by a function of only the data at the node. Specifically we have the following lemma

\begin{lemma}[Reduction to local heuristics]
\label{lem:heuristic-reduction}
Let $\CT$ be a tree rooted at $r$ and specified by a $\branch$ oracle, $\cost$ be an additional $\cost$ oracle on the nodes (can be chosen to always return a constant if no such oracle is inherent to the problem), and let $h$ be a Branch-Local heuristic that explores the tree by and ranks node $N$ on the basis of a function 
\begin{equation}
    \heur(N) = f(\hlocal(N),\hparent(r),\hparent(n_1)\dots,\hparent(n_{d(n)-1}),d(N))
\end{equation}
where $d(N)$ is the depth of the node $N$, $r \to n_1 \to \dots \to n_{d(N)-1} \to N$ is the path from $r \to N$, each of $\hlocal,\hparent$ make a constant number of queries to $\branch,\cost$, and $f$ is a function with no dependence on $N$. Then there is a new tree $\CT'$ such that
\begin{itemize}
    \item There exists a bijection $\Phi$ from the set of nodes of $\CT$ to $\CT'$.
    \item There exists a new oracle $\branch'$ that makes a constant number of queries to $\branch$ and $\cost$ and $\phi(N_2) \in \branch(\phi(N_1))$ if and only if $N_2 \in \branch(N_1)$.
    \item There exists a local function $\hcost$ that makes a constant number of queries to $\branch,\cost$ so that ranking nodes on the basis of $\hcost(\phi(N))$ is equivalent to ranking them on the basis of heuristic $h$.
\end{itemize}
\end{lemma}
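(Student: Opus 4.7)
The plan is to define $\Phi$ so that each image node carries along the full sequence of $\hparent$ values of its ancestors, thereby turning the ancestor-dependence of $\heur$ into purely local data. Concretely, for $N \in \CT$ with root-to-$N$ path $r \to n_1 \to \dots \to n_{d(N)-1} \to N$, I would set
\[
\Phi(N) = \bigl(N,\; \hparent(r),\, \hparent(n_1),\, \dots,\, \hparent(n_{d(N)-1})\bigr),
\]
and let $\CT'$ be the image of $\CT$ under $\Phi$. Since $\Phi$ is injective (the first coordinate already determines $N$), it is a bijection onto $\CT'$, and the second condition of the lemma amounts to requiring that the children of $\Phi(N)$ in $\CT'$ are exactly $\{\Phi(N'): N'\in\branch(N)\}$, which is forced by the construction.

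Next I would implement $\branch'$ acting on $\Phi(N)$ as follows. First, issue one query to $\branch(N)$ to obtain the list of children $N'$ in $\CT$; then issue a constant number of queries to compute $\hparent(N)$. For each child $N'$, emit
\[
\Phi(N') = \bigl(N',\; \hparent(r),\, \hparent(n_1),\, \dots,\, \hparent(n_{d(N)-1}),\, \hparent(N)\bigr),
\]
where the prefix is read directly off the auxiliary data stored inside $\Phi(N)$ and requires no further oracle access. Thus $\branch'$ uses a constant number of queries to $\branch,\cost$, as required, and by construction $\Phi(N_2)\in\branch'(\Phi(N_1))$ iff $N_2\in\branch(N_1)$. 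Note this preserves depth and node count, so $\CT'$ inherits the same $d$ and $T$ bounds as $\CT$.

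Finally, I would define the local ranking function $\hcost$ on $\CT'$ by
\[
\hcost(\Phi(N)) = f\bigl(\hlocal(N),\, \hparent(r),\, \hparent(n_1),\, \dots,\, \hparent(n_{d(N)-1}),\, d(N)\bigr).
\]
Every parent-contribution argument is read for free from the augmented label, $d(N)$ is the length of that stored chain (also free), and $\hlocal(N)$ costs a constant number of queries to $\branch,\cost$ by assumption. The function $f$ itself is query-free. Hence $\hcost$ is local in the sense required, and by construction $\hcost(\Phi(N)) = \heur(N)$, so ordering the nodes of $\CT'$ by $\hcost$ reproduces exactly the ordering produced by $h$ on $\CT$.

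The only mild obstacle is a bookkeeping one rather than a mathematical one: one must check that the augmented labels $\Phi(N)$ can be used as node identifiers inside the downstream quantum subroutines (which treat nodes as opaque bit strings indexed along paths from the root). Since the augmented label has size $O(d \cdot s)$, where $s$ is the output size of $\hparent$, and since the subroutines in \prim{qtsearch}, \prim{qtsize}, and \prim{qtminleaf} only manipulate nodes through $\branch'$, this adds at most a polylogarithmic overhead that is already absorbed in the $\tilde{O}$ notation of \thm{subtree-main}. With this verified, all three conclusions of the lemma hold.
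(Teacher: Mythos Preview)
Your proof is correct and follows essentially the same construction as the paper: augment each node with the list of $\hparent$ values along its root path so that $\heur$ becomes locally computable, and build $\branch'$ by appending $\hparent$ of the current node when descending. The only cosmetic difference is that the paper stores $d(N)$ explicitly as a separate tuple component while you recover it as the length of the stored list; your added remark about label size and compatibility with the quantum subroutines is a reasonable extra observation that the paper omits.
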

\begin{proof}
The main idea is to allow the node to maintain a record of its depth and a transcript of the necessary information (value of $\hparent$) from its parents. Specifically $\phi(N)$ will consist of a tuple consisting of $N$, the depth $d(N)$, and a list of $\hparent(P)$ for every parent $P$ of $N$. We assign $\phi(r) = (r,0,\{\})$. We construct the oracle $\branch'$ as follows: add $(N_2,d+1,\mathbf{append}(l,\hparent(N_1))$ to $\branch'(N_1,d,l)$ if and only if $N_2 \in \branch(N_1)$. From the conditions on $\hparent$ (\defn{branch-local}), $\branch'$ makes only a constant number of queries to $\branch$.
By definition, $h$ ranks on the basis of 
\begin{equation}
\heur(N) = f(\hlocal(N),\hparent(r),\hparent(n_1)\dots,\hparent(n_{d(n)-1}),d(N)).
\end{equation} 
It follows from the construction above that there exists a function $\hcost$ such that $\hcost(\phi(N)) = \heur(N)$ for all nodes of $\CT$. Therefore exploring the nodes of $\CT'$ by ranking according to $\hcost$ (which is a function only of an individual node) is equivalent to exploring nodes in $\CT$ according to $h$. This completes the proof.
\end{proof}

\lem{heuristic-reduction} allows us to consider, without loss of generality, only those heuristics that rank nodes by some local function $\hcost$. In the rest of the paper we shall call this function the heuristic cost. It therefore suffices to prove \thm{subtree-main} only for such heuristics, we do so in \thm{subtree-main-local} to appear below.

We first make certain assumptions without loss of generality that will clarify the presentation of the proof.

\begin{assumption}
\label{assum:binary-tree}
All trees considered will be binary.
\end{assumption}
Suppose otherwise that the maximum degree were some constant $\degree$. We can replace a node and its $\degree$ children by a binary tree of depth at most $\log(\degree)$, ensuring that the new trees are binary. The asymptotic query complexities do not change.

\begin{assumption}
\label{assum:integer-cost}
The value of the cost function $\cost$ for every node $n$ in the \bnb{} tree is a positive integer.
\end{assumption}

\begin{assumption}
\label{assum:integer-hcost}
The value of the heuristic cost function $\hcost$ for every node $n$ in the \bnb{} tree is a positive integer.
\end{assumption}

These assumptions can be obtained simply by truncating the real value and scaling all the costs up to an integer. They may introduce a logarithmic dependence on the precision via the $\log(c_{\max}),\log(h_{\max})$ terms.

\begin{assumption}
\label{assum:heuristic-total-order}
The heuristic used by the algorithm does not rank any two active nodes exactly the same.
\end{assumption}
This assumption is reasonable in order for the classical algorithm to succeed. If it is not true, we can make the following modification. Suppose there are at most $A$ active nodes at a time. We simply toss $2\log(A)$ random coins and only declare two nodes equally ranked if both $\hcost$ and the random coins are the same. Due to the choice of number of coins, the probability of a collision is $1 - \omega(1)$ so no two nodes are ranked the same almost surely.

\subsubsection{Oracle Transforms}
Let $r_0$ be the root of $\CT$, and $n_0,n_1$ be its children. A primitive we will use repeatedly throughout the algorithm will be to truncate subtrees rooted at $r,n_0,n_1$ based on the value of heuristic cost $\hcost$. Below we define some oracle transforms that perform these truncations. In each case the new oracles will make a constant number of calls to the original tree oracle.

The first transform simply removes all nodes with heuristic cost above a threshold.
\begin{definition}[Truncation based on heuristic cost]
\label{defn:general-truncation}
Given a subtree of $\CT$ rooted at $r$ specified by a $\branch$ oracle, heuristic cost $\hcost$, and a threshold $t$, $\trunc{r,\branch,\hcost,t}$ returns an oracle $\branch'$ corresponding to a subtree of $\CT$ with all nodes with $\hcost$ greater than $t$ removed. Specifically, $\branch'(N) = \{x | x \in \branch(N), \hcost(N) <  t\}$. $\trunc{\cdot}$ makes a constant number of queries to $\branch$.
\end{definition}

The next transformation removes all nodes whose parent's heuristic cost is above a threshold.
\begin{definition}[Truncation based on heuristic cost of parent]
\label{defn:parent-truncation}
Given a subtree of $\CT$ rooted at $r$ specified by a $\branch$ oracle, heuristic cost $\hcost$, and a threshold $t$, $\trunc{r,\branch,\hcost,t}$ returns an oracle $\branch'$ corresponding to a subtree of $\CT$ such that all nodes with heuristic cost $> t$ whose parents have cost $\le t$ are leaves. Specifically, for any node $N$ with parent $N'$, $\branch'(N)$ is empty ($N$ is a leaf) if $\hcost(N') \le t < \hcost(N)$. $\ptrunc{\cdot}$ makes a constant number of queries to $\branch$.
\end{definition}

Finally, we define a transform that truncates the subtrees rooted at $n_0,n_1$ based on two different heuristic costs.
\begin{definition}[Truncate left and right subtrees]
\label{defn:two-sided-truncate}
Given a subtree of $\CT$ rooted at $r$ specified by a $\branch$ oracle, heuristic cost $\hcost$, and two thresholds $t_0,t_1$, $\twotrunc{r,\branch,\hcost,t_1,t_2}$ returns an oracle $\branch'$ corresponding to a subtree of $\CT$ such that all nodes in the subtree rooted at $n_0$ with heuristic cost greater than $t_0$, and all nodes in the subtree rooted at $n_1$ with heuristic cost greater than $t_1$ are removed. This can be accomplished by first defining a new oracle $\branch_1$ by adding a flag to each node indicating if it is descended from $n_0$ or $n_1$. Then $\twotrunc{}$ can be defined using a single call to $\branch_1$ which in turn uses one call to $\branch$.
\end{definition}

\subsubsection{Main Proof}
Before we state and prove \thm{subtree-main-local}, we present two primitives that we will employ
\begin{enumerate}
    \item A subroutine to find the $k^{\mathrm{th}}$ lowest heuristic cost in a subtree rooted at a child of $r$. This primitive is formally specified in \algo{kth-cost-node}, and satisfies the following guarantee
    \begin{lemma}
    $\kthcost{\branch,d,\hcost,h_{\max},k,\epsilon,\delta}$ (\algo{kth-cost-node}) makes $\tilde{O}\left(\frac{\sqrt{k}d}{\epsilon^{3/2}}\log\left(h_{\max}\right)\log^2\left(\frac{1}{\delta}\right)\right)$ queries to $\branch$ and $\hcost$, and returns with probability $\ge 1 - \delta$, such that the number of elements of $\{x | x = \hcost(N), N \in \CT'\}$ that are less than $c$ is strictly more than $k$ and less than or equal to $k(1+\epsilon)^{2}$ where $\CT'$ is the tree specified by $\branch$.
    \end{lemma}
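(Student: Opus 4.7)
The plan is to locate the threshold $c$ by binary searching over integer values in $[1, h_{\max}]$, using the truncation transform $\trunc{\cdot}$ from \defn{general-truncation} together with the quantum tree size estimator $\qtsize{\cdot}$ from \prim{qtsize}. By \assum{integer-hcost}, the heuristic cost takes integer values, so binary search needs only $O(\log h_{\max})$ iterations. At each candidate $c$, I would apply $\trunc{r,\branch,\hcost,c}$ to obtain an oracle $\branch_c$ for the subtree of nodes with $\hcost$ strictly below $c$ (this uses only a constant factor in queries to $\branch$ and $\hcost$), and then call $\qtsize{\branch_c, d, k, \delta', \epsilon}$ with $\delta' = \Theta(\delta/\log h_{\max})$ to test whether this truncated subtree has at most $k$ nodes or more.

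First I would define the search as seeking the smallest $c^\star$ for which $\qtsize$ either returns ``more than $k$ nodes'' or returns an estimate strictly greater than $k$. By the one-sided guarantees of the modified primitive, at $c^\star$ the true count $T_{<c^\star}$ satisfies $T_{<c^\star} > k$, while at $c^\star - 1$ the count satisfies $T_{<c^\star - 1} \le k(1+\epsilon)^2$. Since $T_{<c}$ is a non-decreasing step function of $c$, and by \assum{heuristic-total-order} no two nodes share a heuristic cost so increments are controlled, the two inequalities combine to place $T_{<c^\star}$ in the window $(k, k(1+\epsilon)^2]$ up to the correct $\epsilon$ calibration. If the window were too loose, I would run a confirmation call of $\qtsize$ with $T_0$ doubled at the returned threshold to certify the upper bound on the count.

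The cost per iteration is $\tilde{O}\bigl(\sqrt{kd}/\epsilon^{3/2} \log^2(1/\delta')\bigr)$ queries; summing over $O(\log h_{\max})$ binary search iterations gives a total of $\tilde{O}\bigl(\sqrt{kd}/\epsilon^{3/2} \log(h_{\max}) \log^2(1/\delta)\bigr)$, which is bounded by the claimed $\tilde{O}\bigl(\sqrt{k}\,d/\epsilon^{3/2} \log(h_{\max}) \log^2(1/\delta)\bigr)$ since $\sqrt{d} \le d$. A union bound over the $O(\log h_{\max})$ invocations of $\qtsize$ controls the total failure probability at $\delta$.

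The main obstacle I anticipate is reconciling the one-sided \emph{multiplicative} error of $\qtsize$ with the sharp ``strictly more than $k$'' criterion required in the lemma: because the primitive is ambiguous in the range $[k, k(1+\epsilon)^2)$, one must argue carefully that the transition threshold of the binary search produces a count that is provably above $k$ while not exceeding $k(1+\epsilon)^2$. This will likely require choosing the accuracy parameter passed to $\qtsize$ slightly tighter than $\epsilon$ (so that $(1+\epsilon')^2$ fits within $(1+\epsilon)^2$), and possibly issuing a verification call at $c^\star$ and $c^\star - 1$ to pin the true count into the desired window. Everything else follows by the geometric-series analyses already used in the proofs of \thm{iqbb-main} and \thm{iqts-main}.
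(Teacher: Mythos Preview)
Your proposal is correct and follows essentially the same approach as the paper: binary search over integer thresholds in $[0,h_{\max}]$, using $\trunc{\cdot}$ together with the one-sided $\qtsize{\cdot}$ (with $T_0=k$) to detect the transition from ``at most $k$ nodes'' to ``more than $k$ nodes,'' and invoking \assum{integer-hcost} and \assum{heuristic-total-order} to guarantee that the transition window has width at least one. The paper's proof is slightly terser---it simply takes the minimum $\alpha$ for which $\qtsize{\cdot}$ reports ``more than $k$ nodes'' and does not bother with your proposed verification call or tightening $\epsilon$---but the content is the same, and your observation that $\sqrt{kd}\le \sqrt{k}\,d$ reconciles the $\qtsize{\cdot}$ cost with the stated bound.
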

    \begin{proof}
    To see the correctness, observe that
    \begin{itemize}
    \item For a cost $\alpha_{\min}$ such that $\le k$ nodes have $\hcost$ less than $\alpha_{\min}$: the $\qtsize{\cdot}$ call in \algo{kth-cost-node} returns an integer.
    \item For a cost $\alpha_{\max}$ such that $\ge k(1 + \epsilon_2)^2$ nodes have $\hcost$ less than $\alpha_{\max}$: the $\qtsize{\cdot}$ call in \algo{kth-cost-node} returns "more than $k$ nodes".
    \end{itemize}
    By \assum{heuristic-total-order}, $\alpha_{\min}$ and $\alpha_{\max}$ are different and any $\alpha$ in this window is an acceptable solution. The window has size at least $1$ and locating it in a domain of $[0,h_{\max}]$ requires $\tilde{O}(\log(h_{\max}))$ binary search iterations, leading to $\tilde{O}(\log(h_{\max}))$ queries to $\qtsize{\cdot}$, which yields the required query complexity.
    
    \begin{algorithm}[!htbp]
\DontPrintSemicolon
\KwIn{A subtree $\CT'$ of $\CT$ rooted at $r$ specified by a $\branch$ oracle, depth $d$, heuristic cost function $\hcost$, maximum heuristic cost $h_{\max}$, integer $k$, error parameter $\epsilon \le 1$, failure probability $\delta$}
\KwResult{$\kthcost{\branch,d,\hcost,h_{\max},k,\delta,\epsilon} = c$ such that $c$ is between the $k^{\mathrm{th}}$ and $k(1 + \epsilon_2)^{\mathrm{th}}$ element of the set $\{x | x = \hcost(N), N \in \CT'\}$ sorted in ascending order.}
Define $\delta' = \delta/\log\left({h_{\max}}\right)$.\;
Run binary search on the interval $[0,h_{\max}]$ to find the minimum $\alpha$ such that: 
\quad $\qtsize{\trunc{r,\branch,\hcost,\heur,\alpha},d,k,\delta',\epsilon}$ returns "more than  k nodes" \;
\KwRet $\alpha$.
\caption{Kthcost: Find $k^{\mathrm{th}}$  lowest heuristic cost in subtree.}
\label{algo:kth-cost-node}
\end{algorithm}
    
    \end{proof}
    \item A subroutine to find the minimum heuristic cost greater than or equal to a threshold $t$ in a subtree rooted at a child of $r$. This primitive is formally specified in \algo{next-cost}.
    \begin{lemma}
    $\nextcost{\branch,d,\hcost,h_{\max},c,\delta}$ makes $\tilde{O}\left(\sqrt{T_{\max}} d \log\left(h_{\max}\right) \log^2\left(\frac{1}{\delta}\right)\right)$ returns the minimum heuristic cost of any node in the subtree defined by $\branch$, among those nodes with heuristic cost greater than $c$.
    \end{lemma}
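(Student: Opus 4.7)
The plan is to implement $\nextcost$ as a binary search on integer values $v$ in the range $(c, h_{\max}]$, using the Apers et al. version of \prim{qtsearch} as a membership oracle to decide, for each candidate $v$, whether any node of the subtree defined by $\branch$ has $\hcost$ in the half-open interval $(c, v]$. By \assum{integer-hcost}, the smallest integer $v$ for which the answer is ``yes'' is exactly $\min\{\hcost(N) : N \in \CT', \hcost(N) > c\}$, which is the value the procedure must return.

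I would first define, for each candidate threshold $v$, a local marking function $\f_{c,v}$ with $\f_{c,v}(N) = 1$ iff $c < \hcost(N) \le v$, noting that evaluating $\f_{c,v}$ requires a single query to $\hcost$. Then I would invoke $\qtsearch{\branch, d, T_{\max}, \f_{c,v}, \delta'}$ with $\delta' = \delta/\Theta(\log h_{\max})$: if a marked node is returned, the target set is nonempty at threshold $v$; otherwise it is (with confidence $1 - \delta'$) empty. A standard binary search over the integer interval $(c, h_{\max}]$ then isolates the smallest such $v$ in $O(\log h_{\max})$ iterations, and this is the value returned by $\nextcost$.

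For the complexity analysis, each invocation of $\qtsearch{\cdot}$ uses $\tilde{O}(\sqrt{T_{\max}}\, d \log(1/\delta'))$ queries to $\branch$ and $\hcost$ by \prim{qtsearch}. Multiplying by the $O(\log h_{\max})$ binary search iterations and absorbing subdominant logarithmic factors into $\tilde{O}$ yields the claimed $\tilde{O}(\sqrt{T_{\max}}\, d \log(h_{\max}) \log^2(1/\delta))$ total queries. Correctness is established by a union bound: each of the $O(\log h_{\max})$ $\qtsearch$ calls succeeds with probability at least $1 - \delta'$, so the whole binary search returns the correct minimum with probability at least $1 - \delta$.

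The main subtlety will be handling the dependence on the subtree size $T_{\max}$, which is not itself an input to $\nextcost$: \prim{qtsearch} requires only an upper bound on the number of nodes, and at every call site of $\nextcost$ within the incremental framework of \sec{incremental-quantum} such a bound is known (typically $O(2^m)$). Since the underlying $\branch$ oracle is unchanged across iterations and only the efficiently computable marking function $\f_{c,v}$ varies with $v$, no explicit tree truncation or auxiliary oracle transform in the style of \defn{general-truncation}--\defn{two-sided-truncate} is required, and the proof reduces to combining the guarantee of \prim{qtsearch} with the correctness of a standard integer-valued binary search.
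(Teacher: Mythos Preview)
Your approach has a real gap in the complexity argument. You propose to run $\qtsearch{\branch,d,T_{\max},\f_{c,v},\delta'}$ directly on the untruncated oracle $\branch$, and you justify passing $T_{\max}$ by asserting that ``at every call site of $\nextcost$ within the incremental framework \ldots\ such a bound is known (typically $O(2^m)$)''. This is not correct: at the call sites in \algo{cost-based-subtree}, the first argument to $\nextcost$ is $n_{\curr}$ or $n_{\neg\curr}$ together with the \emph{unmodified} $\branch$ oracle, so the tree being searched is the full subtree $t_i$ rooted at a child of $r$. These subtrees can contain up to $T-1$ nodes of the whole tree $\CT$, potentially exponentially larger than $2^{m}$. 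The $O(2^{m_\curr})$ size bounds that appear in the overall query-complexity analysis apply only to the \emph{truncated} trees produced by $\trunc{\cdot}$ or $\ptrunc{\cdot}$, not to $t_0,t_1$ themselves. Consequently either your $\qtsearch$ calls receive an invalid size parameter (losing their correctness guarantee), or each call actually costs $\tilde O(\sqrt{T}\,d)$ rather than $\tilde O(\sqrt{T_{\max}}\,d)$, and the stated bound does not follow.

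The paper circumvents this precisely by first applying $\ptrunc{r,\branch,\hcost,c}$, which cuts every root-to-leaf path at the first node whose $\hcost$ exceeds $c$. The resulting tree has at most $T_{\max}$ nodes by definition, and the candidates for the desired minimum are exactly its new leaves; a single invocation of $\qtminleaf{\cdot}$ on this truncated tree then gives the answer with the claimed cost. If you repaired your argument by inserting the same truncation before the binary search, it would become correct, but at that point it essentially re-implements $\qtminleaf$, which is itself a binary search over cost thresholds layered on $\qtsearch$ (cf.\ \prim{qtminleaf} and \cite{montanaro2020branch}). So the missing idea is exactly the truncation step you explicitly dismissed as unnecessary.
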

    \begin{proof}
    By construction $\branch',\hcost$ is a \bnb{} tree, and any node $n$ such that $\hcost(n) > c$ and $\hcost(p(n)) \le c$ where $p(n)$ is the parent of $n$, is marked as a leaf . Furthermore, $\cost'(n) = \hcost(n)$ for a leaf $n$. Thus the $\qtminleaf{\cdot}$ call returns the correct solution, and the runtime follows from \prim{qtminleaf}.
    \end{proof}
    
    \begin{algorithm}[!htbp]
\DontPrintSemicolon
\KwIn{A subtree $\CT'$ of $\CT$ rooted at $r$ specified by a $\branch$ oracles, depth $d$, heuristic cost function $\hcost$, maximum cost $h_{\max}$, upper bound $T_{\max}$ on number of nodes whose parents have cost at most $h_{\max}$, threshold $c'$ and failure probability $\delta$}
\KwResult{$\nextcost{\branch,d,\hcost,h_{\max},c,\delta}$, the minimum of the set $\{\hcost(N) | N \in \CT, \hcost(N) \ge c$\}.}
$\branch' = \ptrunc{r,\branch,\hcost,c}$.\;
Define $\cost'$ such that for any node $n \in \CT'$, $\cost'(n) = \hcost(n)$ if $\branch'(n)$ is empty, and $\cost'(n) = 0$ otherwise \; \label{lin:make-ptrunc-bnb}
Return $\qtminleaf{\branch',\cost',d,h_{\max},T_{\max},\delta}$.
\caption{MinimumNextCost: Find the lowest heuristic cost in a subtree, that is greater than given parameter $c$.}
\label{algo:next-cost}
\end{algorithm}
\end{enumerate}

\begin{algorithm}[!htbp]
\DontPrintSemicolon
\KwIn{A tree $\CT$ specified by $\cost$ and $\branch$ oracles, with root $r$, depth $d$, heuristic cost function $\hcost$, upper bound $h_{\max}$ on the $\hcost$ of any node, and failure probability $\delta$, integer input parameter $m$.}
\KwResult{$\qsubtreelocal{r,m,\branch,\hcost,d,\delta}$ returns $c_0,c_1$ such that $r \cup \trunc{n_0,\branch,\hcost,c_0} \cup \trunc{n_1,\branch,\hcost,c_1}$ is a sub-tree satisfying the conditions of \defn{subtree-estimation} }
Let $n_0,n_1$ be the children of root $r$ ordered such that $\hcost(n_0) \le \hcost(n_1)$. Let $t_0,t_1$ denote the trees rooted at $n_0,n_1$.\;
Set flags $\curr = 0$. \tcp*[r]{Indicates currently growing subtree: $\curr \in \{0,1\}$}
Set $T_0 = T_1 = 1$. \tcp*[r]{$T_1$ nodes currently included from $t_i$}
Set $m_0 = m_1 = 0$. \tcp*[r]{Attempt to include $2^{m_i}$ nodes from $t_1$ in next iteration}
Set $\epsilon_1, \epsilon_2 = \ln(2)/8$. \;
Set $\delta' = \delta/8(m+3)$. \;
Set $c_0 = 
\hcost(n_0), c_1 = \hcost(n_1))$. \;
Set $c'_0 = \max(\branch(n_0)), c'_1 = \max(\branch(n_1))$.\;
Set $\done = 0$.\;
Define $\branch_\curr  = \trunc{n_\curr,\branch,\hcost,c_{\neg \curr}}$\;
Define $\branch'_\curr  = \trunc{n_\curr,\branch,\hcost,c'_{\neg \curr}}$\;
 \While{$T_0  +  T_1  + 1 \le 2^{m+1}(1 + \epsilon_2)^{4}(1 + \epsilon_1)^{2}$} {\label{lin:size-check-cost}
    \While{$\qtsize{\branch_\curr,d,2^{m_\curr},\delta',\epsilon_2}$ returns ``more than $2^{m_\curr}$ nodes'' } {\label{lin:add-valid-nodes}
     \If{$2^{m_\curr} \le 2^{m+1}(1 + \epsilon_2)^{4}(1 + \epsilon_1)^{2} -  T_{\neg{\curr}} - 1$} {\label{lin:exit-check-2}
        Set $m_\curr \gets m_\curr  + 1$. \;\label{lin:increase-m-1}
        }
    \Else {
         Set $\done = 1$, exit loop.  \;\label{lin:set-done-1}
    }
    }
    \While{
    $c_{\neg \curr} \ne {c'}_{\neg \curr} \And \done = 0 \And \qtsize{\branch'_\curr,d,2^{m_\curr},\delta',\epsilon_2}$ returns ``more than $2^{m_\curr}$ nodes''
    } {\label{lin:look-forward-loop}
         \If{$2^{m_\curr} \le 2^{m+1}(1 + \epsilon_2)^{4}(1 + \epsilon_1)^{2} -  T_{\neg{\curr}} -1$ \label{lin:look-forward-if}} { \label{lin:exit-check-3}
        Set $m_\curr \gets m_\curr  + 1$. \;\label{lin:increase-m-2} 
        }
    \Else {
        Set $\done = 1$, exit loop.  \; \label{lin:set-done-2}
    }
     Set $c_\curr \gets \kthcost{n_\curr,\branch,\hcost,2^{m_\curr -1},\delta',\epsilon_2}$.\;\label{lin:intermediate-c-update}
     \If {$c_{\curr} < {c'}_{\neg{(\curr)}}$} {\label{lin:make-valid-again}
        Set $c_{\neg \curr} \gets \nextcost{n_{\neg \curr},\branch,\hcost,c_{\curr},\delta'}$.\;
        Set $T_{\neg \curr} \gets \qtsize{\trunc{n_{\neg \curr},\branch,\hcost,c_{\neg \curr}},d,2^{m_\curr - 1},\delta',\epsilon_1}$.\;
    }
    \Else{
        Exit loop.\;
    }
    
    }
    \If{$\done = 1$} {
        Set $c_\curr \gets \kthcost{n_\curr,\branch,\hcost,2^{m+1}(1 + \epsilon_2)^{4}(1 + \epsilon_1)^{2} -  T_{\neg{\curr}} -1,\delta', \epsilon_1}$.\;\label{lin:kthcost-done=1}
         Exit loop.\;\label{lin:done-termination}
    }
    \Else {
        Set $c_\curr \gets \nextcost{n_\curr,\branch,\hcost,c'_{\neg \curr},\delta'}$.\;\label{lin:set-c-not-done}
        Set $c'_\curr \gets \kthcost{n_\curr,\branch,\hcost,2^{m_\curr},\delta',\epsilon_2}$.\;\label{lin:set-c-prime}
    }
    For $i \in \{0,1\}$, $T_i \gets \qtsize{\trunc{n_i,\branch,\hcost,c_i},d, 2^{m_i - 1},\delta',\epsilon_1}$. \;
    Set $\curr \gets \neg (\curr)$.\;
}
\KwRet $c_0,c_1$\;
\label{lin:final-return}
\caption{Quantum Partial Subtree Generation}
\label{algo:cost-based-subtree}
\end{algorithm}

\begin{theorem}
\label{thm:subtree-main-local}
Let $\CT$ be a tree with root $r$, specified by a $\branch$ oracle. Given a classical algorithm $\CA$ that uses a heuristic that ranks by a function $\hcost$ of the nodes , $\qsubtreelocal{r,m,\branch,\hcost,d,h_{\max},\delta}$ (see \algo{cost-based-subtree}) uses $\tilde{O}\left(d\sqrt{2^{m}}\log(h_{\max})\log^2\left(\frac{1}{\delta}\right)\right)$ queries to $\branch,\hcost$ and returns with probability at least $1-\delta$, returns $c_0,c_1$ such that the tree corresponding to $r \cup \trunc{n_0,\branch,\hcost,c_0} \cup \trunc{n_1,\branch,\hcost,c_1}$ contains at most $4\cdot2^m$ nodes and at least the first $2^{m}$ nodes visited by $\CA$. 
\end{theorem}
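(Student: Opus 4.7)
The plan is to verify three properties of \algo{cost-based-subtree}: correctness (the returned $c_0,c_1$ induce a subtree containing the first $2^m$ nodes visited by $\CA$), size (at most $4\cdot 2^m$ nodes), and the stated query-complexity and failure-probability bounds.

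First I would establish a structural characterization of the visit order of $\CA$. Because the heuristic is the local function $\hcost$ and $\CA$ greedily expands the active node of minimum $\hcost$, a short induction on the visit sequence shows that the first $k$ nodes visited by $\CA$ are exactly those with the $k$ smallest values of the path-maximum heuristic cost $\phi(N):=\max\{\hcost(N'):N'\text{ lies on the root-to-}N\text{ path, including }N\text{ itself}\}$ (ties broken by \assum{heuristic-total-order}). On the other hand, \trunc{n_i,\branch,\hcost,t} retains exactly those descendants of $n_i$ whose strict ancestors all have $\hcost<t$. Combining these, for any $(c_0,c_1)$ the set $\{r\}\cup\trunc{n_0,\branch,\hcost,c_0}\cup\trunc{n_1,\branch,\hcost,c_1}$ contains the first $k$ visits of $\CA$ whenever each $c_i$ exceeds the largest $\phi$-value appearing among those visits that lie in the corresponding subtree $t_i$. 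The task thus reduces to finding the smallest such $(c_0,c_1)$ with $k=2^m$, without inflating the total size beyond a constant factor of $2^m$.

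The correctness argument would then proceed via an invariant at the top of the outer \texttt{while}: after each iteration (a) $\{r\}\cup\trunc{n_0,\branch,\hcost,c_0}\cup\trunc{n_1,\branch,\hcost,c_1}$ contains the first $T_0+T_1+1$ visits of $\CA$; (b) each $T_i$ is a one-sided upper estimate of $|\trunc{n_i,\branch,\hcost,c_i}|$ with multiplicative slack at most $(1+\epsilon_1)^2$; and (c) $c'_\curr$ is the next admissible heuristic value on the current side consistent with the opposite threshold. Each outer iteration grows $c_\curr$ using doubling calls to \qtsize{\cdot} on $\branch_\curr$, which restricts the current side by the opposite threshold. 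The inner \emph{look-forward} loop using $\branch'_\curr$ handles the subtle case where a raised $c_\curr$ admits nodes of $t_{\neg\curr}$ whose $\phi$-values fall below $c_\curr$: a call to \nextcost{\cdot} pushes $c_{\neg\curr}$ to the next admissible heuristic value and $T_{\neg\curr}$ is re-estimated. Termination occurs either via the outer while-condition or by setting $\done=1$, after which the final \kthcost{\cdot} call fixes $c_\curr$ so that the estimated total lies exactly at the upper budget $2^{m+1}(1+\epsilon_2)^4(1+\epsilon_1)^2$. Invariant (a) combined with the fact that exit requires $T_0+T_1+1\ge 2^m$ gives the first-$2^m$ containment, and the choice $\epsilon_1=\epsilon_2=\ln(2)/8$ yields $(1+\epsilon_1)^2(1+\epsilon_2)^4\le 2$, so the true size is at most $2\cdot 2^{m+1}=4\cdot 2^m$. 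The query complexity then follows from a standard geometric-sum argument: each outer or inner iteration makes a constant number of calls to \qtsize{\cdot}, \kthcost{\cdot}, or \nextcost{\cdot} on a subtree of size $O(2^{m_\curr})$, each costing $\tilde{O}(d\sqrt{2^{m_\curr}}\log(h_{\max})\log^2(1/\delta'))$ queries by \prim{qtsize} and the preceding lemmas, where $\delta'=\delta/(8(m+3))$. Because $m_\curr$ only ever increases and is bounded by $m+O(1)$, and the total number of primitive calls is $O(m)$, summing $\sqrt{2^{m_\curr}}$ is a geometric series bounded by $O(\sqrt{2^m})$, yielding the claimed query bound; a union bound over the $O(m)$ primitive invocations keeps the total failure probability below $\delta$.

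The main obstacle I expect is the rigorous maintenance of invariant (a) across the look-forward loop. Specifically, I must verify (i) that after raising $c_\curr$ beyond the previous $c'_{\neg\curr}$, the \nextcost{\cdot}-based update to $c_{\neg\curr}$ captures every node of $t_{\neg\curr}$ whose $\phi$-value now falls below $c_\curr$, rather than skipping past some node whose $\phi$-value lies between the old and new thresholds, and (ii) that the one-sided slacks in \qtsize{\cdot}, \kthcost{\cdot}, and \nextcost{\cdot} do not compound across iterations to either cause a premature exit (missing some of the first $2^m$ visits) or push the final subtree size past $4\cdot 2^m$. The case analysis around the $\done=1$ branch, where the final \kthcost{\cdot} update on line~\ref{lin:kthcost-done=1} replaces the symmetric alternating update, will be the most delicate piece.
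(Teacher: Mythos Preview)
Your path-maximum characterization $\phi(N)=\max\{\hcost(N'):N'\text{ on the }r\text{-to-}N\text{ path}\}$ is correct and is a clean way to describe the visit order of $\CA$. The query-complexity and failure-probability arguments are also essentially the same as the paper's (geometric sum over the monotone $m_\curr$, union bound over $O(m)$ primitive calls).

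However, your central invariant (a)---that at the top of the outer loop the current truncated set \emph{contains the first $T_0+T_1+1$ visits of $\CA$}---is too strong and does not hold. It already fails at initialization: with $c_i=\hcost(n_i)$ the set is $\{r,n_0,n_1\}$, but the third node visited by $\CA$ need not be $n_1$; it may well be a child of $n_0$. More seriously, the look-forward loop and the $\done=1$ termination deliberately include nodes of $t_\curr$ with $\hcost$ in the window $[c_{\neg\curr},c'_{\neg\curr})$. Such nodes are visited by $\CA$ only \emph{after} certain nodes of $t_{\neg\curr}$ that are not yet in the set, so the set cannot equal (or contain) an initial segment of the visit sequence at that moment. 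Your proposal flags this as ``the most delicate piece'' but offers no mechanism to rescue (a); in fact (a) is unrescuable, because the algorithm is designed to overshoot on one side.

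The paper handles this by abandoning exact containment and instead proving a \emph{fraction-of-valid-nodes} invariant: it defines a node to be \emph{valid} if every earlier-visited node is already in the set, and maintains two loop invariants $\CI_1,\CI_2$ guaranteeing that at every relevant program point at least a $1/\bigl(2(1+\epsilon_2)^4\bigr)$ fraction of the returned nodes are valid, with all invalid nodes confined to one side and the gap between $c_{\neg\curr}$ and $c'_{\neg\curr}$ bounded by a factor $2(1+\epsilon_2)^2$ in node count. Since the returned tree is shown to have at least $2^{m+1}(1+\epsilon_2)^4$ nodes, the valid fraction alone already accounts for $\ge 2^m$ nodes, which forces the first $2^m$ visits to be present. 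To make your approach go through you would need to replace (a) by an invariant of this weaker, ratio-based type; the $\phi$-language you set up would actually make the bookkeeping of the two thresholds $c_i,c'_i$ quite natural once you allow a controlled fraction of ``out-of-order'' nodes rather than insisting on an exact prefix.
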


\paragraph{Outline of Proof.} We denote the children of the root $r$ by $n_0,n_1$ so that $n_0$ has lower heuristic cost, and denote the sub-tree rooted at $n_i$ by $t_i$. In the following, we shall use the phrase \emph{returned subtree} to refer to $r \cup \trunc{n_0,\branch,\hcost,c_0} \cup \trunc{n_1,\branch,\hcost,c_1}$. It is clear from \defn{general-truncation} that the returned subtree contains the root $r$, and any nodes from $t_i$ with heuristic cost less than $c_i$. At any moment, $\curr$ is a flag that indicates that nodes are being added to $t_\curr$ by \algo{cost-based-subtree}. We maintain $c_i$ such that, when nodes from $t_\curr$ are being added to the returned subtree, $c_{\neg \curr}$ represents the lowest heuristic cost of an 'unexplored' node from $t_{\neg \curr}$. We can therefore add any nodes in $t_\curr$ with cost less than $c_{\neg \curr}$ to the returned subtree, as the classical algorithm also explores those nodes first. To verify that more than $4 \cdot 2^{m}$ nodes are not added, we propose to add $2^{m_\curr}$ nodes from $t_\curr$ to the returned subtree, and  use $\qtsize{\cdot}$ to check that there indeed exist $2^{m_\curr}$ nodes in $t_\curr$ with heuristic cost less than $c_{\neg \curr}$. If there are a sufficient number of nodes, we add $2^{m_\curr}$ nodes to the returned subtree, and increment $m_{\curr}$. Otherwise $c_{\curr}$ is updated accordingly, and $\curr$ is negated, i.e. we start adding nodes from the other subtree. Note that it is crucial that we propose to add an exponentially increasing number of nodes, i.e., if there are a constant number of queries at a single value of $m_i$, the total oracle complexity of the calls to $\qtsize{\cdot}$ is asymptotically the same as that with the greatest value of $m_i$ (due to the behavior of sums of geometric series). The scheme just described suffers from the following problem: it could be that the sequence of nodes explored by the classical algorithm oscillates back and forth between the $t_0$ and $t_1$, adding a small number of nodes at a time without doubling the number of nodes added from either subtree. Then, our proposed scheme would repeatedly switch between $t_0$ and $t_1$ while failing to add $2^{m_0}$ or $2^{m_1}$ nodes to the respective subtree. This could occur $\Theta(2^m)$ times, leading to $\Theta(2^m)$ calls to $\qtsize{\cdot}$, and the oracle complexity would be larger than desired. To ensure that the oracle complexity is $O(\sqrt{2^m})$ we must ensure that, for any value of $m_\curr$, the attempt to add $2^{m_\curr}$ nodes fails at most a constant number of times. We therefore maintain a new set of values $c'_0,c'_1$, that are assigned as follows: suppose we have added $2^{m_i-1}$ nodes to $t_i$ and $2^{m_i}$ nodes cannot be added. Then $c'_i$ is a value chosen such that $2^{m_i}$ nodes with heuristic cost less than $c'_i$. In principle therefore, there is no need to include nodes from $t_i$ until nodes with cost less than $c'_i$ from $t_{\neg i}$ have been included, and we could use the original scheme with $c'_{i}$ taking the place of $c_i$. We note however, that in this case, we add nodes in a different order from the classical algorithm. In particular any node from $t_\curr$ with cost greater than $c_{\neg \curr}$ and less than $c'_{\neg \curr}$ would not be explored by the classical algorithm until some other nodes from the other tree are explored. In the worst case, there could be enough nodes with heuristic cost between $c_{\curr}$ and $c_{\neg \curr}$ so that no more nodes from $t_{\neg curr}$ are added, which makes the algorithm incorrect. To resolve this problem, we add nodes between $c_{\curr}$ and $c_{\neg \curr}$ \emph{incrementally}, ensuring that the total number of nodes added is never more than double of what would be explored by the classical algorithm. The detailed proof to follow, describes the details of this procedure and show that it works, establishes that $c,c'$ are assigned appropriately. A final detail is that $\qtsize{\cdot}$ can be evaluated only to constant relative error, and we must carefully control the accumulation of these errors throughout the procedure.

\begin{proof}[{\textbf{Proof of \thm{subtree-main-local}}}]
 We first show that the algorithm is correct if we ignore the failure probabilities of the subroutines. Correctness with the appropriate failure probability will then follow from the union bound. 
\paragraph{Correctness.} 
We first estimate the size of the trees that may be returned by the algorithm.  By definition from Line~{\ref{lin:final-return}} of \algo{cost-based-subtree}, the returned subtree contains the root $r$, and any nodes from $t_i$ with cost less than $c_i$. Define $\hat{T}_i$ to be the true number of nodes returned from $t_i$, i.e., there are exactly $\hat{T}_{i}$ nodes in $t_i$ with heuristic cost less than or equal to $c_i$. The returned tree contains $\hat{T}_0 + \hat{T}_1 + 1$ nodes. We also observe that $T_i$ is an estimate of $\hat{T}_i$ obtained by $\qtsize{\cdot}$ with relative error $\epsilon_1$. Therefore,
\begin{equation}
\label{eq:T-estimates}
    \forall i \in \{0,1\}, \qquad \hat{T}_i \le T_i \le (1 + \epsilon_1)^2\hat{T}_i
\end{equation}

We consider the conditions that may lead to exiting the outermost loop (Line~\ref{lin:size-check-cost}).
\begin{itemize}
    \item Flag $\done$ is set to 1. There are two possibilities (Line~\ref{lin:set-done-1} and Line~\ref{lin:set-done-2}) for the first time this happens. In either case, the algorithm terminates, and $c_\curr$ is set to the minimum cost that ensures that there are between $2^{m+1}(1 + \epsilon_2)^4(1 + \epsilon_1)^2 - T_{\neg \curr} - 1$ and $2^{m+1}(1 + \epsilon_2)^4(1 + \epsilon_1)^4 - T_{\neg \curr} - 1$ nodes in $t_{\curr}$ with cost less than $c_\curr$ (due to the guarantee on $\kthcost{\cdot}$, Line \ref{lin:kthcost-done=1}). Finally from \eq{T-estimates}, the returned tree contains at least $2^{m+1}(1 + \epsilon_2)^4$ and at most $2^{m+1}(1 + \epsilon_2)^4(1 + \epsilon_1)^4$ nodes.
    \item $T_0  +  T_1  + 1 \le 2^{m+1}(1 + \epsilon_2)^{4}(1 + \epsilon_1)^{2}$ and flag $\done$ is set to  0.  The guards in (Line~\ref{lin:exit-check-2} and Line~\ref{lin:exit-check-3}) ensure that since flag $\done=0$, $c_\curr$ is chosen so that $2^{m_\curr-1} \le \hat{T}_\curr \le 2^{m_\curr-1}(1 + \epsilon_2)^2$. Since $\done = 0$, we also have $2^{m_\curr -1} \le 2^{m+1}(1+\epsilon_2)^4(1 + \epsilon_1)^2 - T_{\neg \curr} - 1$.
    This guarantees that $T_{\neg \curr} + 2^{m+1} + 1 < 2^{m+2}(1 + \epsilon_2)^{4}(1 + \epsilon_1)^{2}$. From \eq{T-estimates}, the returned tree contains at most $2^{m+1}(1 + \epsilon_2)^4$ and at most $2^{m+1}(1 + \epsilon_2)^4(1 + \epsilon_1)^4$ nodes. Furthermore, $T_0  +  T_1  + 1 \le 2^{m+1}(1 + \epsilon_2)^{4}(1 + \epsilon_1)^{2}$ implies from \eq{T-estimates} that the returned tree contains at least $2^{m+1}(1 + \epsilon_2)^4$ nodes.
\end{itemize}

Observe that the algorithm returns at least $2^{m+1}(1 + \epsilon_2)^4$ nodes. Given any sequence of included nodes we now define a notion of \emph{validity} as follows.
  \begin{definition}
\label{defn:node-validity}
Consider a \bnb{} tree explored by a classical algorithm $\CA$, and a set of nodes $\CS$.  We call a node $n \in \CS$ \emph{valid} if and only if every node that $\CA$ visits before $n$ is included in $\CS$.
\end{definition}
 Suppose the returned list does not contain the first $2^{m}$-nodes explored by the classical algorithm. Then  any node not in the first $2^m$ explored is then necessarily invalid by \defn{node-validity}, and the fraction of valid nodes is strictly less than $2^{m}/2^{m+1}(1 + \epsilon_2)^4 = 1/2(1 + \epsilon_2)^4$. If it is shown that at least $1/2(1 + \epsilon_2)^{4}$ fraction of the nodes are valid, the returned tree contains at least the first  $2^{m}$ nodes explored by the classical algorithm.

 Specifically we wish to establish the following proposition
 \begin{proposition}
 \label{prop:main-invariant-cost} During the execution of \algo{cost-based-subtree}, $c_0,c_1$ are such that at least $1/2(1 + \epsilon_2)^{4}$ fraction of nodes in the set $r \cup \{n \in t_0| \hcost(n) < c_0\} \cup \{n \in t_1| \hcost(n) < c_1\}$ are valid.
 \end{proposition}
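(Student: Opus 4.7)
The plan is to proceed by induction on the updates to $c_0$ and $c_1$ that occur during the execution of \algo{cost-based-subtree}. The base case is immediate: at initialization $c_i = \hcost(n_i)$ for $i \in \{0,1\}$, and the strict inequality defining the set reduces the included set to $\{r\}$, which is trivially valid.

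For the inductive step, I would assume the invariant holds just before an update to $c_0$ or $c_1$ and examine each update site: Line~\ref{lin:intermediate-c-update} (setting $c_\curr$ via $\kthcost{\cdot}$ inside the look-forward loop), Line~\ref{lin:make-valid-again} (setting $c_{\neg \curr}$ via $\nextcost{\cdot}$), Line~\ref{lin:kthcost-done=1} (setting $c_\curr$ upon $\done = 1$), and Line~\ref{lin:set-c-not-done} (setting $c_\curr$ on normal iteration end). These group into two scenarios. In Scenario~A, $c_\curr$ increases but every newly-included node in $t_\curr$ satisfies $\hcost < c_{\neg \curr}$; since the heuristic ranks active nodes by $\hcost$, each such node is visited by the classical algorithm strictly before every not-yet-included node in $t_{\neg \curr}$, so the new entries are valid. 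In Scenario~B, Line~\ref{lin:intermediate-c-update} pushes $c_\curr$ into the look-forward window $[c_{\neg \curr}, c'_{\neg \curr})$; some new entries in $t_\curr$ are then temporarily invalid, but Line~\ref{lin:make-valid-again} immediately raises $c_{\neg \curr}$ to $\nextcost{n_{\neg \curr},\branch,\hcost,c_{\curr},\delta'}$, the smallest $\hcost$ in $t_{\neg \curr}$ exceeding the new $c_\curr$. By construction of $\nextcost{\cdot}$, this pulls in every node in $t_{\neg \curr}$ that the classical algorithm would visit before the newly-added entries in $t_\curr$, restoring their validity. I would therefore treat the pair (Line~\ref{lin:intermediate-c-update}, Line~\ref{lin:make-valid-again}) as a single invariant-preserving atomic step; the guard $c_\curr < c'_{\neg \curr}$ at Line~\ref{lin:make-valid-again} ensures this pairing is always taken when invalid entries are introduced, and the ``Exit loop'' branch handles the remaining case by exiting before any pending invalid nodes commit.

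The main obstacle is bookkeeping the multiplicative error from $\qtsize{\cdot}$ and $\kthcost{\cdot}$, whose returned quantities carry relative factors of at most $(1+\epsilon_1)^2$ and $(1+\epsilon_2)^2$ respectively. When Line~\ref{lin:intermediate-c-update} aims to include approximately $2^{m_\curr - 1}$ nodes with $\hcost < c_\curr$, the true count can be as large as $2^{m_\curr-1}(1+\epsilon_2)^2$, and a parallel slack appears in the $T_{\neg \curr}$ produced by the subsequent $\qtsize{\cdot}$ call. I would argue that, accounting jointly for these slack factors and for the possible alternation of $\curr$ between $t_0$ and $t_1$ within an iteration, the ratio of invalid to valid entries introduced between two consecutive validity-restoring updates is bounded so that the valid fraction remains at least $1/2(1+\epsilon_2)^4$: the factor of $1/2$ absorbs the alternation imbalance (the subtree currently being grown can contribute up to one half of the new invalid entries before its partner is updated), while $(1+\epsilon_2)^4$ tracks the two approximate-size factors that stack between consecutive applications of $\nextcost{\cdot}$. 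The termination branches at Lines~\ref{lin:set-done-1} and~\ref{lin:set-done-2} reduce to the same argument applied to the final $\kthcost{\cdot}$ call at Line~\ref{lin:kthcost-done=1}, whose error guarantee pins the returned size into the window $[2^{m+1}(1+\epsilon_2)^4,\,2^{m+1}(1+\epsilon_2)^4(1+\epsilon_1)^4]$ that the outer analysis in the proof of \thm{subtree-main-local} assumes.
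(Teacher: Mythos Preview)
Your overall inductive plan is close in spirit to the paper's, but there is a genuine gap in Scenario~B. You assert that when the guard at Line~\ref{lin:make-valid-again} fails (i.e.\ $c_\curr \ge c'_{\neg\curr}$), the ``Exit loop'' branch fires \emph{before any pending invalid nodes commit}. This is not correct: $c_\curr$ has already been updated at Line~\ref{lin:intermediate-c-update}, and after the look-forward loop exits with $\done=0$, Line~\ref{lin:set-c-not-done} resets $c_\curr$ to $\nextcost{n_\curr,\ldots,c'_{\neg\curr},\ldots}$, which still includes every node of $t_\curr$ with $\hcost$ in the window $[c_{\neg\curr},c'_{\neg\curr})$. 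Those nodes can be invalid (the classical algorithm may interleave nodes of $t_{\neg\curr}$ with $\hcost$ in that window before them), and they persist into the next iteration of the outer loop after $\curr$ is flipped. So your ``atomic pair'' does not always restore validity, and the induction cannot proceed from the single hypothesis that the valid fraction is at least $1/2(1+\epsilon_2)^4$.

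The paper closes this gap by strengthening the inductive hypothesis: at the top of each outer-loop iteration it maintains, in addition to the validity fraction, that (i) all invalid nodes lie in $t_{\neg\curr}$, and (ii) the number of nodes of $t_{\neg\curr}$ with $\hcost<c'_{\neg\curr}$ is at most $2(1+\epsilon_2)^2$ times the number with $\hcost<c_{\neg\curr}$. Condition~(ii) is exactly the bound that controls the size of the window $[c_{\neg\curr},c'_{\neg\curr})$ and hence the number of invalid nodes that can be carried forward; it is also what makes the $\done=1$ case at Line~\ref{lin:set-done-2} go through, since there the final $\kthcost{\cdot}$ at Line~\ref{lin:kthcost-done=1} is taken against $c'_{\neg\curr}$ rather than $c_{\neg\curr}$, and only condition~(ii) lets you convert ``at most $(1+\epsilon_2)^2$-fraction have $\hcost\ge c'_{\neg\curr}$'' into the required $1/2(1+\epsilon_2)^4$ overall valid fraction. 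Your sketch of where the $1/2$ and $(1+\epsilon_2)^4$ factors come from is on the right track, but without carrying condition~(ii) as part of the invariant you cannot actually establish it at the inductive step.
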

 \prop{main-invariant-cost} can be proved through induction as follows. At the start of the algorithm $c_0$ and  $c_1$ are both chosen such that no nodes from $t_0,t_1$ are returned. Therefore all the returned nodes are valid as any search algorithm explores the root of the tree first. 
 We split the execution of the algorithm into phases based on the outermost loop (defined at Line~\ref{lin:size-check-cost}) as $c_0,c_1$ are only updated within the loop body. The loop body has two exit conditions:
 \begin{itemize}
     \item Return to loop guard at Line~\ref{lin:size-check-cost}. We denote the program state after the $i^{th}$ execution of this line by $L_i$.
     \item Loop break at Line~\ref{lin:done-termination}, when flag $\done  = 1$. Suppose this line is executed after $i$ executions of the loop guard (Line~\ref{lin:size-check-cost}). We denote the program state before that execution as $P_i$.
 \end{itemize}
 Correspondingly we define two invariants:
 \begin{itemize}
     \item $\CI_1(i)$: In $L_i$, the following conditions are satisfied:
     \begin{enumerate}
         \item $c_0, c_1$ are such that at most $1/2(1 + \epsilon_2)^4$ fraction of the returned nodes are valid.
         \item All invalid nodes are in $t_{\neg \curr}$.
         \item The number of nodes in $t_{\neg \curr}$ with heuristic cost $c < c'_{\neg \curr}$ is at most $2(1+ \epsilon_2)^{2}$ times those with heuristic cost $c < c_{\neg \curr}$.
     \end{enumerate}
     \item $\CI_2(i)$: In $P_{i}$ $c_0,c_1$ are such that at least $1/2(1+ \epsilon_2)^4$ fraction of returned nodes are valid.
 \end{itemize}
 If $\CI_1,\CI_2$ are maintained for all $i$, \prop{main-invariant-cost} holds. 
 
  Finally, we notice that due to the form of the return statement, for any node $n \in t_i$, any other node $n' \in t_i$ explored before $n$ is also included. Therefore any nodes that make $n$ \emph{invalid} must have been included from $t_{\neg i}$. We prove by induction that $\CI_1,\CI_2$ are maintained for all $i$. It is clear from the above discussion and the initialization of algorithm variables, that $\CI_1(0)$ holds, which proves the base case of the induction. We show the following two inductive steps: 
  \begin{itemize}
      \item $\CI_1(i) \implies \CI_1(i+1)$ : \\
      If the $\done$ flag is set to 1 during the execution of the loop (Line~\ref{lin:size-check-cost}), the algorithm will terminate and $L_{i+1}$ will never be reached, making $\CI_1(i+1)$ vacuously true. Consider now the case when $\done$ is never set to 1. The first execution will be of the loop at Line~\ref{lin:add-valid-nodes} and since $\done$ is not set to 1, the loop terminates due to its guard failing, i.e., $\qtsize{\branch_\curr,\cost_\curr,2^{m_\curr},\delta',\epsilon_2}$ does not return ``more than $2^{m_\curr}$ nodes''. At this point $m_\curr$ is such that there are less than $2^{m_\curr}(1 + \epsilon_2)^2$ nodes with heuristic cost less than $c_{\neg \curr}$ (or the guard would succeed) and more than $2^{m_\curr - 1}$ such nodes (or the guard would fail in the previous iteration). Since at least $2^{m_\curr - 1}$ nodes will be included from $t_\curr$, all the invalid nodes must now belong to $t_\curr$. 
      
      If the loop at Line~\ref{lin:look-forward-loop} is not executed, $c_{\neg \curr} = c'_{\neg \curr}$. Then, we find that $c_\curr$ is chosen such that every node with heuristic cost $\le c_{\neg \curr}$ has heuristic cost $< c_\curr$ (due to the guarantee on $\nextcost{\cdot}$). This makes all the returned nodes valid in $L_{i+1}$. Furthermore, due to the guarantees on $\kthcost{\cdot}$, at most $2^{m_\curr}(1 + \epsilon_2)^2$ nodes have cost less than $c'_{\curr}$. As at least $2^{m_\curr - 1}$ nodes have cost less than $c_\curr$, the number of nodes in $t_{ \curr}$ with heuristic cost $c < c'_{\curr}$ is at most $2(1+ \epsilon_2)^{2}$ times those with heuristic cost $c < c_{\curr}$.
      
      Otherwise, consider an execution of the loop at Line~\ref{lin:look-forward-loop}. It is possible that this loop is never entered, in which case, we reduce to the above.
      
      Otherwise, let the final clause in the guard succeed at some point, when $m_\curr = \alpha$. Then there are at least $2^{\alpha}$ nodes in $t_i$ less than $c'_{\neg \curr}$ and $m_\curr$ is set to $\alpha + 1$. Line~\ref{lin:intermediate-c-update} sets $c_\curr$ to a value such that between $2^{\alpha}$ and $2^{\alpha}(1 + \epsilon_2)^2$ nodes in $t_\curr$ have cost less than $c_\curr$. Note that success in the guard at Line~\ref{lin:make-valid-again} makes all nodes valid without changing the value of $c_\curr,c_{\neg \curr}$. Therefore, let the guard fail for the first time when $m_\curr = \alpha$ at the start of the loop. We therefore have at most $2^{\alpha + 1}(1 + \epsilon_2)^{2}$ nodes with heuristic cost less than $c'_{\neg \curr}$. Since the guard succeeded in the previous iteration, $c_{\neg \curr}$ must have been set to a value so that at least $2^{\alpha}$ nodes with heuristic cost less than $c'_{\neg \curr}$.
      Again, $c_\curr$ is chosen such that every node with heuristic cost $\le c_{\neg \curr}$ has heuristic cost $< c_\curr$. It follows from the above that at least $1/2(1 + \epsilon_2)^{2}$ fraction of the nodes in $t_{\curr}$ with cost less than $c_{\curr}$ are valid. Since the guard in Line~\ref{lin:make-valid-again} did not fail in an earlier iteration, there are at least $2^{\alpha}$ nodes in $t_i$ with heuristic cost less than $c_{\curr}$ and $c'_\curr$ is set so that at most $2^{\alpha + 1}$ nodes in $t_i$ have heuristic cost less than $c_{\neg \curr}$. Therefore, the number of nodes in $t_{ \curr}$ with heuristic cost $c < c'_{\curr}$ is at most $2(1+ \epsilon_2)^{2}$ times those with heuristic cost $c < c_{\curr}$.
      
      The last line before $L_i$ sets $\curr$ to $\neg \curr$ and from the above cases, $\CI_1(i + 1)$ is true.
      
      \item $\CI_1(i) \implies \CI_2(i+1)$ : \\
      If $P_{i+1}$ is reached, $\done$ must be set to $1$. This can happen at two points in the loop body. If $\done$ is set to 1 at Line~\ref{lin:set-done-1}, the guard at Line~\ref{lin:exit-check-2} must fail, while the guard at Line~\ref{lin:add-valid-nodes} must succeed. Therefore there are at least $2^{m_\curr}$ nodes with heuristic cost $< c_{\neg \curr}$, and $2^{m_\curr} > 2^{m+1}(1 + \epsilon_2)^{4}(1 + \epsilon_1)^{2} -  T_{\neg{\curr}} - 1$.  When $\done = 1$, $c_{\curr}$ is chosen so that the number of returned nodes from $t_\curr$ is at most $(1 + \epsilon_2)^{4}(2^{m+1}(1 + \epsilon_2)^{4}(1 + \epsilon_1)^{2} -  T_{\neg{\curr}} -1)$. At least $1/(1 + \epsilon_2)^2$ fraction of these nodes are valid, and from $\CI_1(i)$ at least $1/(1 + \epsilon_2)^2$ fraction of the returned nodes from $t_{\neg \curr}$ are valid. Thus $\CI_2(i+1)$ is true.
      
      The other possibility is that $\done$ is set to 1 in Line~\ref{lin:set-done-2}, therefore the guard at Line~\ref{lin:exit-check-3} fails while that at Line~\ref{lin:look-forward-if} succeeds. Then there are at least $2^{m_\curr}$ nodes with heuristic cost $< c'_{\neg \curr}$, and $2^{m_\curr} > 2^{m+1}(1 + \epsilon_2)^{4}(1 + \epsilon_1)^{2} -  T_{\neg{\curr}} - 1$.  When $\done = 1$, $c_{\curr}$ is chosen so that the number of returned nodes from $t_\curr$ is at most $(1 + \epsilon_2)^{4}(2^{m+1}(1 + \epsilon_2)^{4}(1 + \epsilon_1)^{2} -  T_{\neg{\curr}} -1)$ (from Line~\ref{lin:kthcost-done=1}, and $\epsilon_1 = \epsilon_2$ by definition). At least $1/(1 + \epsilon_2)^2$ fraction of these nodes have heuristic cost $< c'_{\neg \curr}$, and from $\CI_1(i)$, at least $1/2(1 + \epsilon_2)^2$ fraction of those are valid. As a whole, $1/2(1 + \epsilon_2)^4$ fraction of nodes in $t_\curr$ with cost less than $c_\curr$ are valid. Let $\beta$ be the value of $c_{\neg \curr}$ before the first iteration of the loop at Line~\ref{lin:look-forward-loop}. Since $\done$ was not set to $1$ in Line~\ref{lin:set-done-1}, all the nodes with cost less than $\beta$ must remain valid throughout. Furthermore, due to the guard at Line~\ref{lin:make-valid-again}, $c_{\neg \curr}$ is never incremented beyond $c'_{\neg \curr}$. Thus the fraction of valid nodes in $t_{\neg \curr}$ is at least $1/2(1 + \epsilon_2)^2$, from $\CI_1(i)$. Thus $\CI_2(i+1)$ holds.
  \end{itemize}
  We have shown that $\CI_1(1)$ is true, and that $\CI(i) \implies \CI_1(i+1) \bigwedge \CI_2(i+1)$. By induction therefore, both invariants hold throughout and at least $1/2(1 + \epsilon_2)^2$ fraction of returned nodes are valid. Thus the algorithm does return the first $2^m$ nodes explored by the classical algorithm. The algorithm returns at most $2^{m+1}(1 + \epsilon_2)^4(1 + \epsilon_1)^4$ nodes. Choosing $\epsilon_1 = \epsilon_2 = \ln(2)/8$ ensures that the number of returned nodes is at most $4\cdot2^{m}$. Thus if none of the subroutines fail, correctness is proved.
  
  \paragraph{Query Complexity.}
  When $\done=1$, the algorithm terminates immediately. We thus analyze the oracle complexity in the case where $\done$ remains 0 throughout, until the guard  Line~\ref{lin:size-check-cost} fails.  Consider an execution of the outermost loop. From the assignment of $c_{\curr}$ (Line~\ref{lin:set-c-not-done}), there are more nodes in $t_{\curr}$ with cost less than $c_{\curr}$ than with cost less than $c'_{\neg \curr}$. Furthermore, it follows from Line~\ref{lin:look-forward-loop} that there are at least $2^{m_\curr - 1}$ nodes with cost less than $c'_{\neg \curr}$. From \prim{qtsize}, we therefore have that $T_\curr > 2^{m_\curr - 1}$. As a consequence if $2^{m_\curr} > 2^{m+3}$, we have $T_0 + T_1 + 1 \ge 2^{m+2} > 2^{m+1}(1 + \epsilon_2)^4(1+\epsilon_1)^2$, leading to termination. We therefore have that $m_0 \le m+3, m_1 \le m+3$. 
  
  Notice that each iteration of loops at Line~\ref{lin:add-valid-nodes}, Line~\ref{lin:look-forward-loop} increases $m_\curr$ by 1.
  We  also notice that if $\done=0$, $c'_0,c'_1$ are set at initialization and subsequently in Line~\ref{lin:set-c-prime} is such that at least $2^{m_i}$ nodes in $t_i$ have cost less than $c'_i$ for $i \in \{0,1\}$. 
  As a consequence, in each iteration of the outermost loop $m_\curr$ is incremented at least once in either Line~\ref{lin:increase-m-1} or Line~\ref{lin:increase-m-2}. By the initialization of $c'_0,c'_1$, $c_\curr \ne c_{\neg \curr}$ in the first two iterations of the outermost loop, and $m_{\curr}$ is incremented in Line~\ref{lin:increase-m-2}. Otherwise, it must be that in the last iteration of the loop $c_{\neg \curr}$ was set to be greater than $c'_{\neg \curr}$. At the same time, it is true that two iterations ago $c'_{\neg \curr}$ was chosen so that at least $2^{m_\curr}$ nodes have heuristic cost less than $c'_{\neg \curr}$. Therefore, either $m_{\curr}$ is incremented in Line~\ref{lin:increase-m-1}, and otherwise $c_{\curr} \ne {c_{\neg\curr}}$ and $m_\curr$ is incremented in Line~\ref{lin:increase-m-2}. Since $m_0, m_1 \le m + 3$, there can be no infinite loops and the algorithm terminates.
  
  To bound the oracle complexity we argue as follows: the calls to $\branch$ and $\hcost$ are through calls to the procedures $\qtsize{\cdot}$, $\kthcost{\cdot}$, and $\nextcost{\cdot}$. The query complexity of each of these calls is $\tilde{O}\left(\frac{\sqrt{T_\mathrm{size}}d}{\epsilon^{3/2}}\log(h_{\max})\log^{2}\left(\frac{1}{\delta}\right)\right)$ where $T_\mathrm{size} = O(\sqrt{2^{m_\curr}})$. Now we define three lists $a,b,c$ such that the $i^{\mathrm{th}}$ element (which we denote $a_i,b_i,c_i$) of each list is the value of $m_{\curr}$ at the $i^{\mathrm{th}}$ call of $\qtsize{\cdot}$, $\kthcost{\cdot}$, and $\nextcost{\cdot}$ respectively. We now notice that no more than a constant number of the elements of $a$ can be equal. This is because loops at Line~\ref{lin:add-valid-nodes}, and Line~\ref{lin:look-forward-loop} increment $m_{\curr}$, so the only way to have more than a constant number of elements of $a$ with equal value is for the loop guards to fail more $\omega(1)$ times. From the argument above however, this is impossible, and in each iteration of the outermost loop, $m_{\curr}$ is incremented at least once in either Line~\ref{lin:increase-m-1} or Line~\ref{lin:increase-m-2}. By an identical argument, no more than a constant number of the elements of $b, c$ can be equal. Finally, as $a_i$ is a list of values each less than $2^{m+3}$, of which no more than a constant number are equal $\sum_i \sqrt{2^{a_i}} \le \sum_{j=0}^{m+3} \sqrt{2^j} = O(\sqrt{2^{m}})$. Similarly, $\sum_i \sqrt{2^{b_i}} = O(\sqrt{2^{m}})$ and $\sum_i \sqrt{2^{c_i}} = O(\sqrt{2^{m}})$. The total query complexity is given by $\tilde{O}\left(\frac{\left(\sum_i \sqrt{2^{a_i}} + \sum_i \sqrt{2^{b_i}} + \sum_i \sqrt{2^{c_i}}\right)d}{\epsilon^{3/2}}\log(h_{\max})\log^{2}\left(\frac{1}{\delta}\right)\right) = \tilde{O}\left(\frac{\sqrt{2^m}d}{\epsilon^{3/2}}\log(h_{\max})\log^{2}\left(\frac{1}{\delta}\right)\right)$.

  \paragraph{Controlling failure probability.} By the arguments above, there are fewer than $4(m + 3)$ executions of $\qtsize{\cdot}$, fewer than $2(m+3)$ executions of $\kthcost{\cdot}$, and fewer than $2(m+3)$ executions of $\nextcost{\cdot}$. By the union bound, choosing the failure probability of each procedure to be less than or equal to $\delta' = \delta/8(m+3)$ suffices to lower-bound the probability that any one procedure fails by $\delta$.
\end{proof}

\thm{subtree-main-local} allows us to prove our main result (\thm{subtree-main}) on Partial Subtree Estimation.

\begin{proof}[{\textbf{Proof of \thm{subtree-main}}}]
By \lem{heuristic-reduction} it suffices to consider a heuristic that ranks on the basis of some heuristic cost $\hcost$. We set $\qsubtree{\CA}{r,\branch,d,h_{\max},\delta} = \qsubtreelocal{r,\branch,\hcost,d,h_{\max},\delta}$. $\hcost$ makes only a constant number of queries to $\branch,\cost$ on the original tree, the oracle complexity of thus defined $\qsubtree{\CA}{\cdot}$ follows from \thm{subtree-main-local}. To show that the conditions of $\thm{subtree-main}$ are satisfied, we note the following:
\begin{itemize}
    \item $c_0,c_1$ can be used to define an oracle $\branch'$ corresponding to the tree consisting of the root $r$, $\trunc{n_0,\branch,\hcost,c_0}$, and $\trunc{n_1,\branch,\hcost,c_1}$, using the oracle transform $\twotrunc{r,\branch,c_0,c_1}$.
    \item By \thm{subtree-main-local}, $r \cup \trunc{n_0,\branch,\hcost,c_0} \cup \trunc{n_1,\branch,\hcost,c_1}$ contains at most $4\cdot2^m$ nodes, including at least the first $2^m$ visited by $\hcost$.
    \item By \defn{two-sided-truncate}, the resulting  $\branch'$ oracle makes a constant number of queries to $\branch$.
\end{itemize}
\end{proof}

\section{Numerical Study}
\label{sec:experiments}

In order to quantify the speedup that can be expected by using the proposed Quantum \bnb{} algorithm, we evaluate the performance of widely used classical solvers, Gurobi and CPLEX ~\cite{gurobi,cplex}, on a series of optimization problems. As discussed in section \sec{classical-bnc}, two main factors determine the complexity of the proposed quantum algorithm: the number of nodes explored and the depth of the tree. To quantify the projected performance of \iqbbtext{}, we perform numerical experiments for three problems that are widely studied in the literature: Sherrington-Kirkpatrick (SK) model\cite{sherrington1975solvable}, the maximum independent set (MIS) \cite{Karp1972} and portfolio optimization \cite{cornuejols2006optimization}. We set the hyperparameters of the classical optimizers to use up to 8 threads and to deactivate pruning and feasibility heuristics---computation time is only spent in \bnb{}---and the cutting planes but for the root node. Our goal is to answer the following four questions:

\begin{enumerate}
    \item \emph{What is the typical dependence of the number of nodes explored on the problem size parameters?} We numerically simulate many random instances of each problem type at different problem sizes, and record the number of nodes explored by Gurobi and CPLEX. The median number of nodes over these random instances at each problem size is treated as the typical behavior and we use curve fitting to determine the size dependence. For all investigated settings the dependence is observed to be exponential, i.e., $Q \sim 2^{\alpha n}$ where $n$ is the problems size, and $\alpha$ is a constant. The exponential $\alpha$ is inferred from regression. The obtained data and regression quality is discussed in \sec{numerics-sk}, \sec{numerics-mis}, and \sec{numerics-portfolio}.
    \item \emph{How large is the depth compared to the number of nodes?} Given the form of our quantum speedup ~(\thm{iqbb-main}), a necessary condition for near-quadratic advantage is that $d \ll Q$. To verify this for each of our three problem types, we record the maximum depth reported by Gurobi.\footnote{For a variety of reasons including the fact that node depth is only reported at constant time increments, and based on comments from Gurobi support, the node depth reported by Gurobi may not be an exact estimate of the true maximum depth of any node explored. Nonetheless this metric is the closest that can be extracted from the solver log, and we assume that it is a valid approximation to the true depth, at least in terms of its dependence on $n$. For more information see \href{https://support.gurobi.com/hc/en-us/community/posts/5928495659921-How-does-gurobi-calculate-the-Node-Depth-number-shown-in-logs-}{here}.} We plot (see \fig{depth_dependency}) the ratio of this maximum depth $d_{\max}$ to the square $n^2$ of the problem size, as a function of increasing problem size. We observe from these plots that $d_{\max}/n^2$ is a non-increasing function, indicating that $d = O(n^2)$. From the exponential dependence of $Q$ on $n$, it therefore follows that $d = o(Q)$.
    \item \emph{What is the inferred performance of our quantum algorithm?} Our observations above allow us to use our theoretical results to project the performance of a quantum algorithm obtaining a universal speedup over classical solvers. First we note that the work performed to implement $\branch$ and $\cost$ usually involves solving a continuous and convex optimization problem and is $\poly(n)$ (negligible compared to the exponential number of nodes explored). This work is also no more for the quantum algorithm than the classical algorithm~(in some cases quantum algorithms for continuous optimization~\cite{brandao2017sdp_better,kerenidis2018sdp,kerenidis2018sdp,van2020convex,chakrabarti2020quantum} can be used to accelerate these procedures). We benchmark the performance of the classical \bnb{} procedures by simply reporting the typical number $Q$ of nodes explored. We observed earlier that the depth $d$ is a logarithmic function of $Q$ and we report the quantum complexity as $\tilde{O}(\sqrt{Q})$. Our results are summarized in Table~\ref{table:numerics}.
    \item \emph{What is the spread in performance of \bnb{} over different instances of a fixed problem size?} A final consideration is the spread of the nodes explored $Q$ over the random instances at each problem size. This is interesting for two reasons: firstly, it provides a measure of how accurately our extrapolation measures the performance. Secondly, large spreads in performance further highlight the value of universal speedups that allow our quantum algorithm to leverage the structures that allow classical algorithms to perform much better than worst case for some instances. We estimate the spread as follows, for the largest problem size considered we calculate the spread in performance as the difference between the maximum and minimum number of nodes explored, which is then reported as a percentage of the median, see Table \ref{table:spread}.
\end{enumerate}

\newcommand{\SKgurobi}{$O(2^{0.494n})$}
\newcommand{\SKcplex}{$O(2^{0.513n})$}
\newcommand{\MISgurobi}{$O(2^{0.0373n})$}
\newcommand{\MIScplex}{$O(2^{0.0339n})$}
\newcommand{\PORTgurobi}{$O(2^{0.195n})$}
\newcommand{\PORTcplex}{$O(2^{0.140n})$}

\begin{table}[h]
    \centering
    \begin{tabular}{|c||c|c|c|c|}
    \hline
    \multicolumn{1}{|c||}{} & \multicolumn{2}{c|}{\textbf{Gurobi}} & \multicolumn{2}{c|}{\textbf{CPLEX}}\\
    \hline
         \textbf{Optimization problem} & Classical & Quantum & Classical & Quantum\\
         \hline
         SK model & \SKgurobi & $\tilde{O}(2^{0.247n})$ & \SKcplex & $\tilde{O}(2^{0.257n})$ \\
         \hline
         Maximum independent set & \MISgurobi & $\tilde{O}(2^{0.0187n})$ & \MIScplex & $\tilde{O}(2^{0.0170n})$\\
          \hline
         Portfolio optimization & \PORTgurobi & $\tilde{O}(2^{0.098n})$ & \PORTcplex & $\tilde{O}(2^{0.070n})$\\
         \hline
    \end{tabular}
    \caption{Complexity for different optimization problems using classical MIP solvers and the Quantum Branch and Bound algorithm corresponding to these solvers.}
    \label{table:numerics}
\end{table}

\begin{table}[h]
    \centering
    \begin{tabular}{|c||c|c|}
    \hline
         \textbf{Optimization problem} & \textbf{Gurobi} & \textbf{CPLEX} \\
         \hline
         SK model & $173\%$ & $187\%$ \\
         \hline
         Maximum independent set & $49\%$ & $56\%$\\
          \hline
         Portfolio optimization & $45263\%$ & $12587\%$\\
         \hline
    \end{tabular}
    \caption{Spread on the number of explored nodes for the largest problem sizes measured as $\max -\min$ expressed as a percentage of the median value.}
    \label{table:spread}
\end{table}

\begin{figure}[!ht]
     \centering
     \begin{subfigure}[b]{0.32\linewidth}
         \centering
         \includegraphics[width=\linewidth]{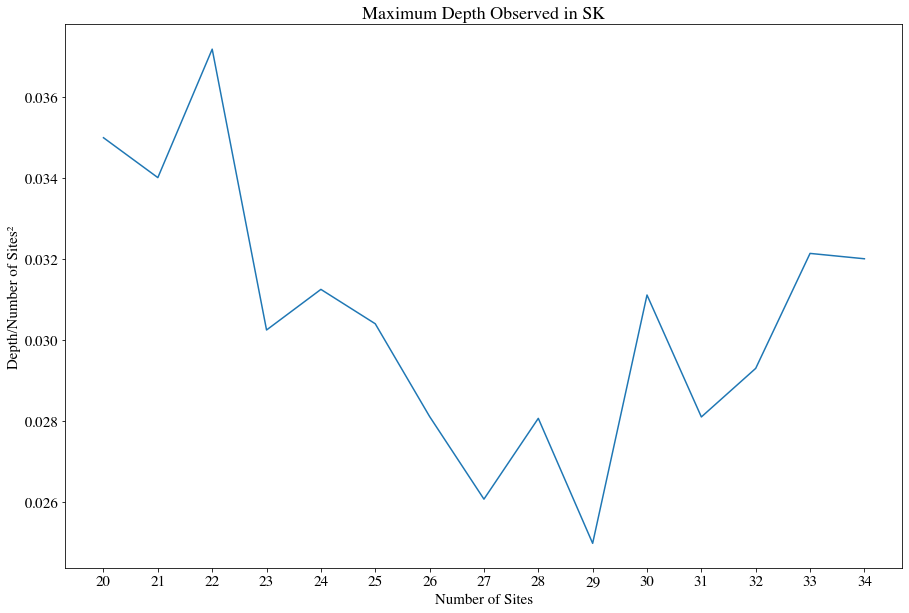}
         \caption{Depth to number of sites ratio for the SK problem}
         \label{fig:depth_dependency_SK}
     \end{subfigure}
     \hfill
     \begin{subfigure}[b]{0.32\linewidth}
         \centering
         \includegraphics[width=\linewidth]{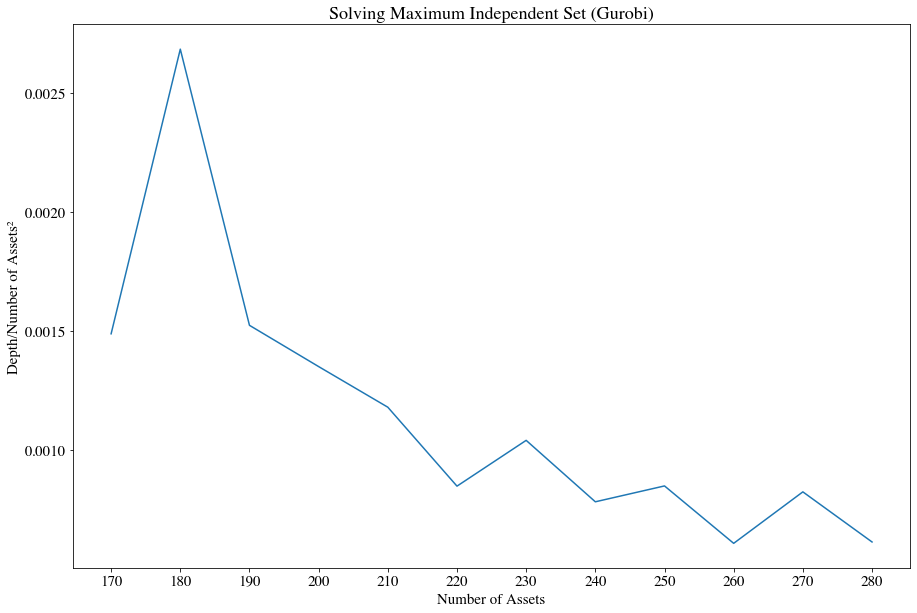}
         \caption{Depth to number of graph nodes ratio for MIS problems}
         \label{fig:depth_dependency_MIS}
     \end{subfigure}
     \hfill
     \begin{subfigure}[b]{0.32\linewidth}
         \centering
         \includegraphics[width=\linewidth]{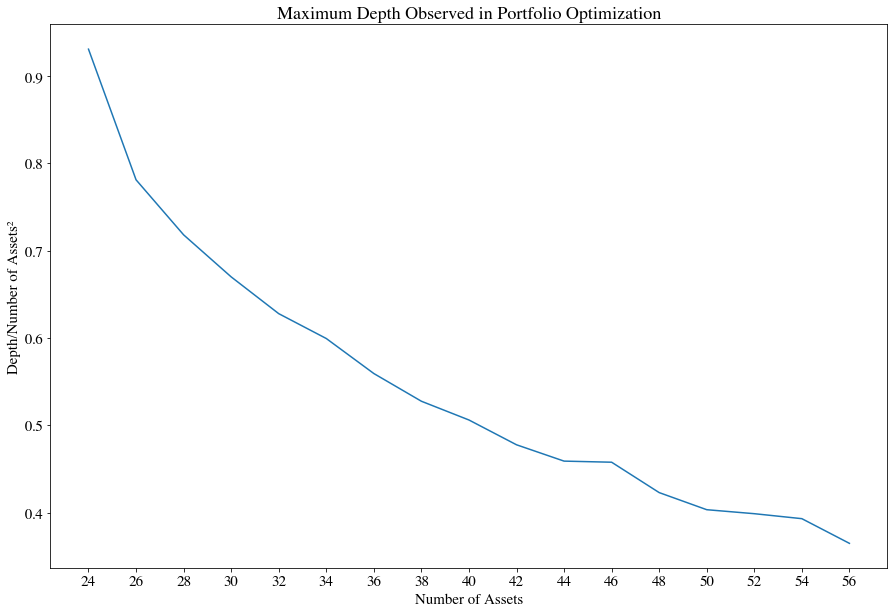}
         \caption{Depth to number of assets for portfolio optimization}
         \label{fig:depth_dependency_PORT}
     \end{subfigure}
     \hfill
     \caption{Depth $d$ of \bnb{} trees explored by Gurobi as a function of problem size $n$. We plot the ratio $d/n^{2}$ for three different problems, to establish that $d = O(n^2)$.} 
     \label{fig:depth_dependency}
\end{figure}

\subsection{Sherrington-Kirkpatrick Model}
\label{sec:numerics-sk}
For the SK model, we generated 50 random instances per problem size between 20 and 36 number of sites. Fig \ref{fig:SK} shows the number of nodes explored by classical optimizers as a function of the number of sites in the problem instance. With a linear regression over the median values we obtained a complexity of \SKgurobi~ with  $r^2>0.998$ for the results obtained with Gurobi and \SKcplex\ with $r^2>0.996$ for ones obtained with CPLEX.

\begin{figure}[!ht]
     \centering
     \begin{subfigure}[b]{0.45\linewidth}
         \centering
         \includegraphics[width=\linewidth]{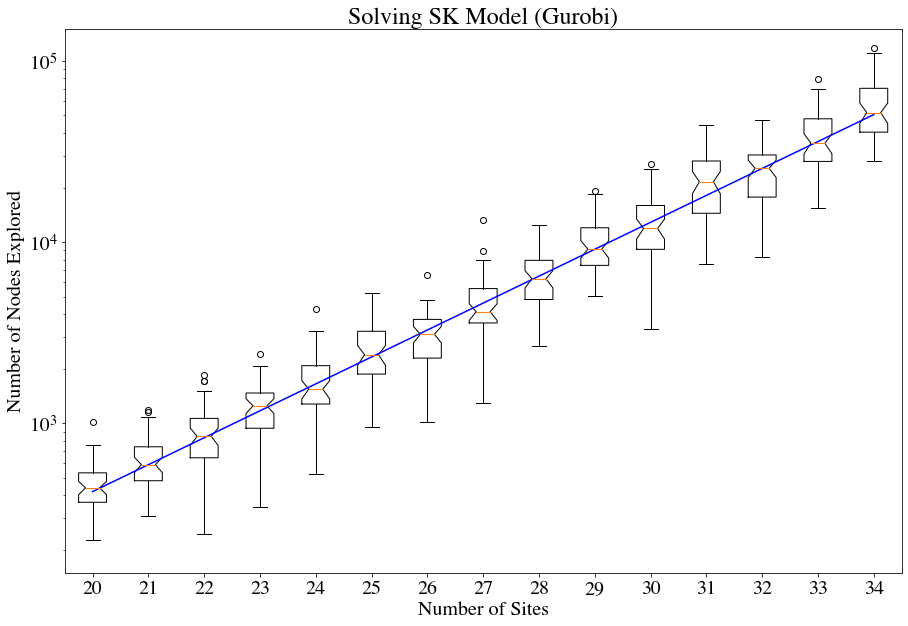}
         \caption{Estimated median complexity \SKgurobi~ with $r^2>0.998$ }
         \label{fig:SKgurobi}
     \end{subfigure}
     \hfill
     \begin{subfigure}[b]{0.45\linewidth}
         \centering
         \includegraphics[width=\linewidth]{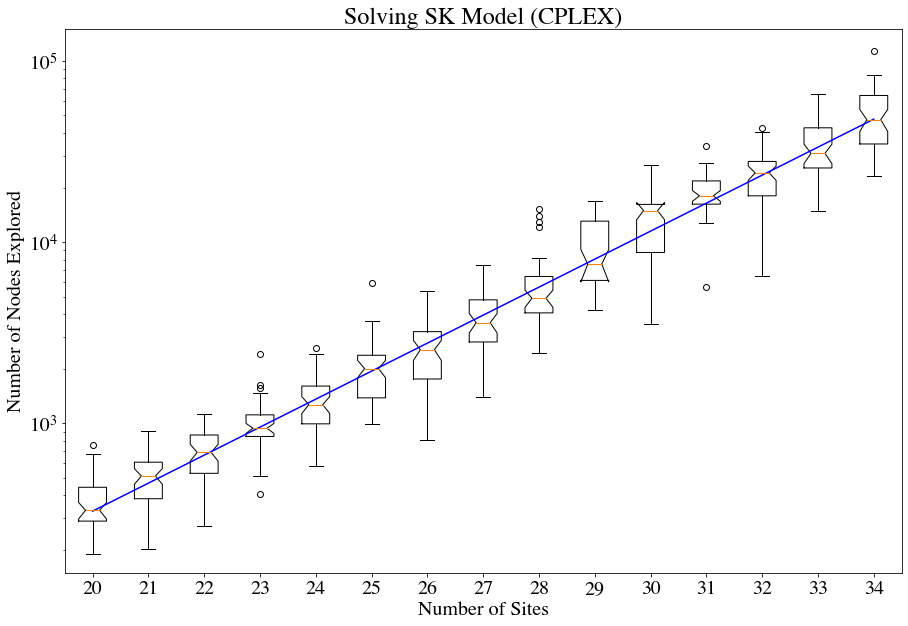}
         \caption{Estimated median complexity \SKcplex~ with $r^2>0.996$}
         \label{fig:SKcplex}
     \end{subfigure}
     \caption{Exponential evolution of the number of nodes explored by Gurobi (a) and CPLEX (b) as a function of the number of sites in the SK model. 50 random instances were considered for each problem characterized by the number of sites. The blue line indicates the linear regression made over the median value of nodes explored.}
     \label{fig:SK}
\end{figure}

\subsection{Maximum Independent Set}
\label{sec:numerics-mis}
For the MIS problem, we generated problems with a number of nodes ranging from $170$ to $290$. For each problem we randomly generated $20$ Erdős-Rényi graphs \cite{erdHos1960evolution} with a $0.8$ edge probability between two nodes. We optimized these problem instances with both Gurobi and CPLEX. In particular, the results obtained with CPLEX are plotted in Figure \ref{fig:MIS}, where the number of nodes explored by the algorithm grows exponential on the number of nodes in the graph. We obtained a measured complexity of \MIScplex~ with $r^2>0.989$. When using Gurobi, we obtained a very similar plot and a measured complexity of \MISgurobi~ with $r^2>0.996$. 

\begin{figure}[!ht]
     \centering
     \begin{subfigure}[b]{0.45\linewidth}
         \centering
         \includegraphics[width=\linewidth]{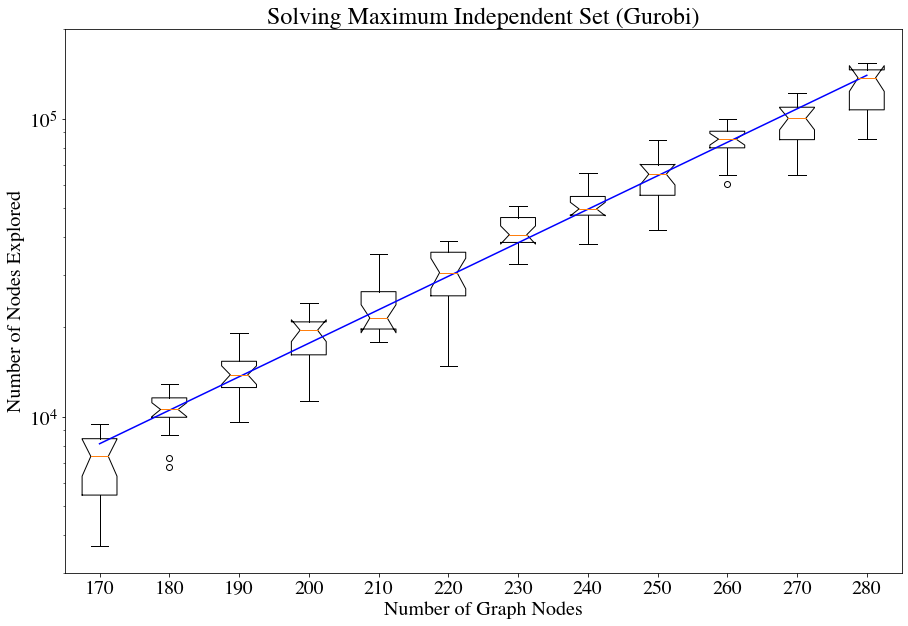}
         \caption{Estimated median complexity \MISgurobi~ with $r^2>0.996$ }
         \label{fig:MISgurobi}
     \end{subfigure}
     \hfill
     \begin{subfigure}[b]{0.45\linewidth}
         \centering
         \includegraphics[width=\linewidth]{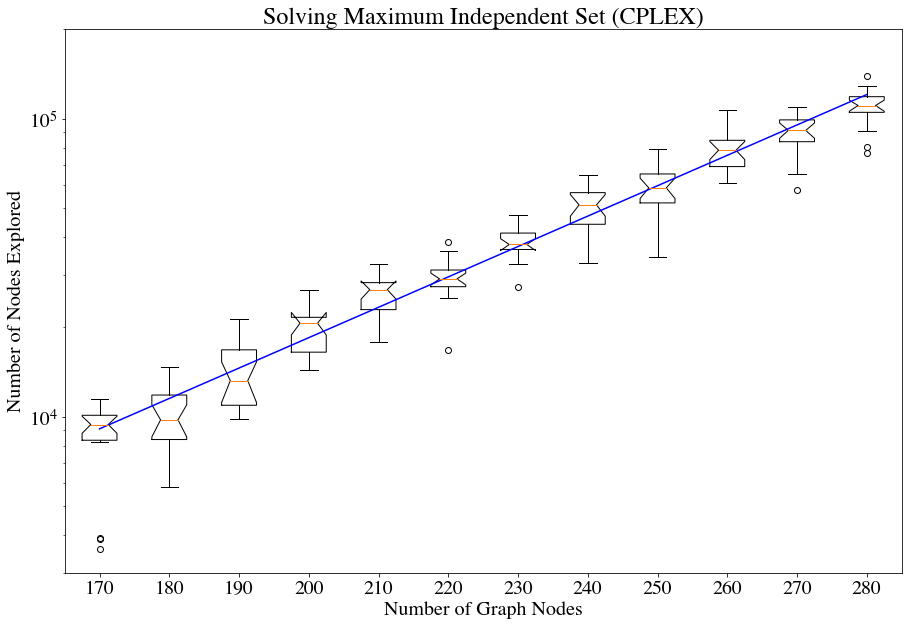}
         \caption{Estimated median complexity \MIScplex~ with $r^2>0.989$ }
         \label{fig:MIScplex}
     \end{subfigure}
      \caption{Exponential evolution of the number of nodes explored by Gurobi (a) and CPLEX (b) as a function of the graph size of the MIS instances. 20 random instances were considered for each problem characterized by the graph size. The blue line indicates the linear regression made over the median value of nodes explored.}
     \label{fig:MIS}
\end{figure}

\begin{figure}[!ht]
     \centering
     \begin{subfigure}[b]{0.45\linewidth}
         \centering
         \includegraphics[width=\linewidth]{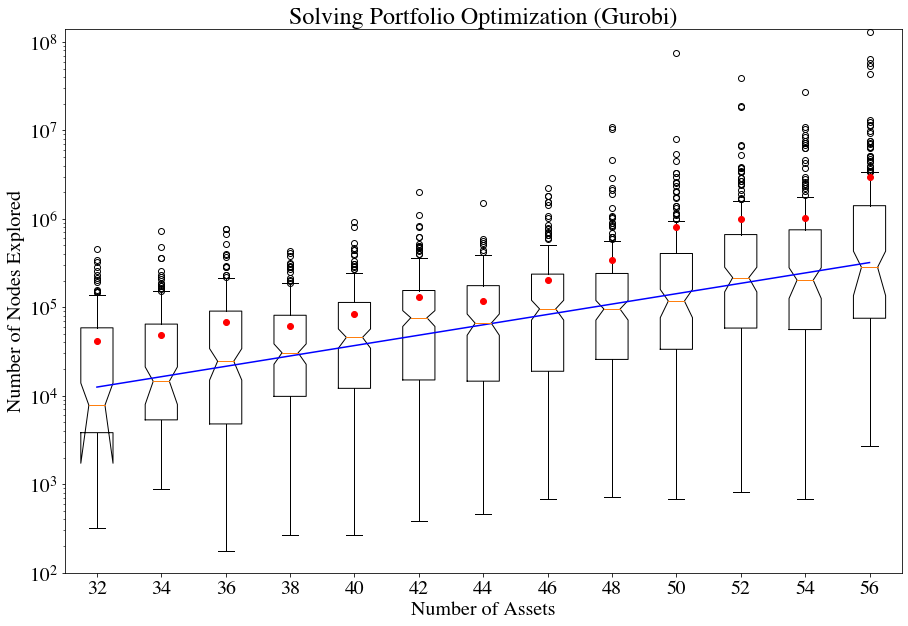}
         \caption{Estimated median complexity \PORTgurobi~ with $r^2>0.953$ }
         \label{fig:PORTgurobi}
     \end{subfigure}
     \hfill
     \begin{subfigure}[b]{0.45\linewidth}
         \centering
         \includegraphics[width=\linewidth]{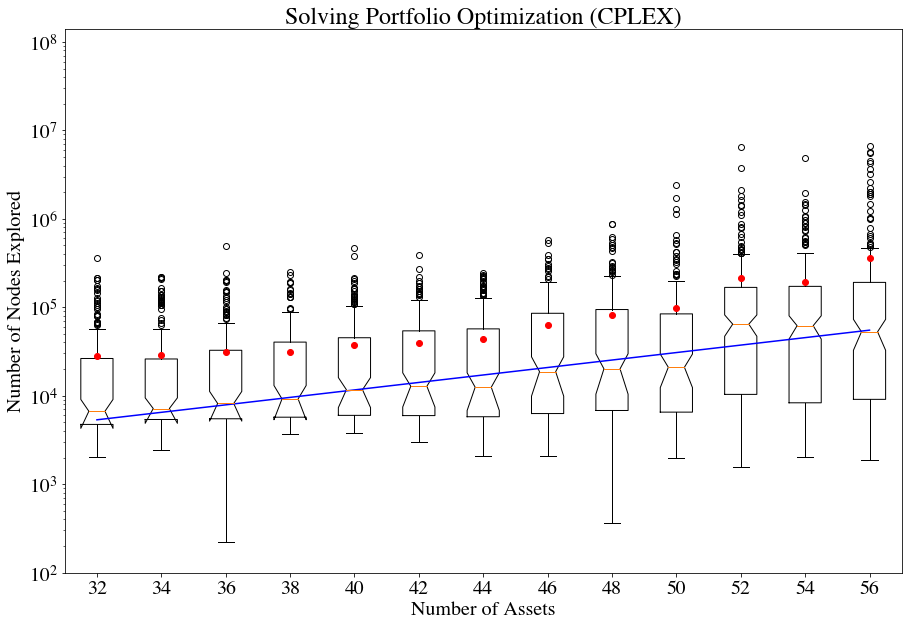}
         \caption{Estimated median complexity \PORTcplex~ with $r^2>0.898$ }
         \label{fig:PORTcplex}
     \end{subfigure}
     \caption{Exponential evolution of the number of nodes explored by Gurobi (a) and CPLEX (b) as a function of the number of assets in the universe. 200 random instances were considered for each problem characterized by the number of assets.  The red dots correspond to the mean number of nodes explored. The blue line indicates the linear regression made over the median value of nodes explored.}
     \label{fig:PORT}
\end{figure}

\subsection{Portfolio Optimization}
\label{sec:numerics-portfolio}
Last but not least, we consider a \textit{mean-variance portfolio-optimization problem} with constraints on the budget, the cardinality and the diversification. The problem is defined as follows, where $\mu$ is the historical returns, $\Sigma$ is the covariance matrix and $\Pi$ are the current prices.

\begin{equation*}
    \begin{split}\begin{aligned}
    \min_{x \in \N^{n-1} \times \R}  &q x^T \Sigma x - \mu^T x\\
    \text{subject to: } &\Pi^T x = B\\
    &||x||_0 = n/2\\
    \forall i, &\Pi_i x_i < 0.1B 
    \end{aligned}\end{split}
\end{equation*}

We generated 200 problem instances with a number of assets between 32 and 56. We randomly generated $\mu$, $\Sigma$ and $\Pi$. We optimized these instances with both CPLEX and Gurobi. The results obtained are plotted in Fig.\ref{fig:PORT}. As the means values can be up to an order of magnitude higher than the median values, we plotted the mean values using red dots. The complexity for Gurobi evolves as \PORTgurobi~ with $r^2>0.953$. We obtained similar results with CPLEX with a measured complexity of \PORTcplex~ with $r^2>0.898$.

We note that the problem of quantum algorithms for portfolio optimization has been studied in the continuous setting, i.e., with no integer variables~\cite{rebentrost2018portfolio,yalovetzky2021nisq,kerenidis19portfolio}. In this setting portfolio optimization can be solved in polynomial time by classical or quantum convex optimization algorithms, and the quantum algorithms obtain polynomial end-to-end speedups. Our algorithms extend these studies to the more general problem of portfolio optimization with integer variables which are often unavoidable, for e.g. in problems with diversification, or minimum transaction level constraints~\cite{cornuejols2006optimization}.

\bibliographystyle{unsrt}
\bibliography{main}

\begin{thebibliography}{10}

\bibitem{papadimitrou1998combinatorial}
Christos~H. Papadimitriou and Kenneth Steiglitz.
\newblock {\em Combinatorial Optimization : Algorithms and Complexity}.
\newblock {Dover Publications}, July 1998.

\bibitem{baccari2020verifying}
Flavio Baccari, Christian Gogolin, Peter Wittek, and Antonio Ac\'{\i}n.
\newblock Verifying the output of quantum optimizers with ground-state energy
  lower bounds.
\newblock {\em Phys. Rev. Research}, 2:043163, Oct 2020.

\bibitem{Wang2022}
Hao Wang, Diederick Vermetten, Furong Ye, Carola Doerr, and Thomas B\"{a}ck.
\newblock {IOHanalyzer}: Detailed performance analyses for iterative
  optimization heuristics.
\newblock {\em {ACM} Transactions on Evolutionary Learning and Optimization},
  2(1):1--29, March 2022.

\bibitem{grotschel1990integer}
Martin Gr{\"o}tschel and Clyde~L Monma.
\newblock Integer polyhedra arising from certain network design problems with
  connectivity constraints.
\newblock {\em SIAM Journal on Discrete Mathematics}, 3(4):502--523, 1990.

\bibitem{grotschel1995polyhedral}
Martin Gr{\"o}tschel, Clyde~L Monma, and Mechthild Stoer.
\newblock Polyhedral and computational investigations for designing
  communication networks with high survivability requirements.
\newblock {\em Operations Research}, 43(6):1012--1024, 1995.

\bibitem{zoltners1980integer}
Andris~A Zoltners and Prabhakant Sinha.
\newblock Integer programming models for sales resource allocation.
\newblock {\em Management Science}, 26(3):242--260, 1980.

\bibitem{sousa1992time}
Jorge~P Sousa and Laurence~A Wolsey.
\newblock A time indexed formulation of non-preemptive single machine
  scheduling problems.
\newblock {\em Mathematical programming}, 54(1):353--367, 1992.

\bibitem{van1999polyhedral}
JM~Van~den Akker, CPM Van~Hoesel, and Martin~WP Savelsbergh.
\newblock A polyhedral approach to single-machine scheduling problems.
\newblock {\em Mathematical Programming}, 85(3):541--572, 1999.

\bibitem{cornuejols2006optimization}
G.~Cornuejols and R.~T{\"u}t{\"u}nc{\"u}.
\newblock {\em Optimization Methods in Finance}.
\newblock Mathematics, Finance and Risk. Cambridge University Press, 2006.

\bibitem{gurobi}
{Gurobi Optimization, LLC}.
\newblock {Gurobi Optimizer Reference Manual}, 2021.

\bibitem{cplex}
IBM~ILOG Cplex.
\newblock V12. 1: User’s manual for cplex.
\newblock {\em International Business Machines Corporation}, 46(53):157, 2009.

\bibitem{land2022doig}
A.~H. Land and A.~G. Doig.
\newblock An automatic method of solving discrete programming problems.
\newblock {\em Econometrica}, 28(3):497--520, 1960.

\bibitem{harrow2009linear}
Aram~W. Harrow, Avinatan Hassidim, and Seth Lloyd.
\newblock Quantum algorithm for linear systems of equations.
\newblock {\em Phys. Rev. Lett.}, 103:150502, Oct 2009.

\bibitem{li2019sublinear}
Tongyang Li, Shouvanik Chakrabarti, and Xiaodi Wu.
\newblock Sublinear quantum algorithms for training linear and kernel-based
  classifiers.
\newblock In {\em International Conference on Machine Learning}, pages
  3815--3824. PMLR, 2019.

\bibitem{li2021sublinear}
Tongyang Li, Chunhao Wang, Shouvanik Chakrabarti, and Xiaodi Wu.
\newblock Sublinear classical and quantum algorithms for general matrix games.
\newblock In {\em Proceedings of the AAAI Conference on Artificial
  Intelligence}, volume~35, pages 8465--8473, 2021.

\bibitem{brandao2017quantum}
Fernando~GSL Brandao and Krysta~M Svore.
\newblock Quantum speed-ups for solving semidefinite programs.
\newblock In {\em 2017 IEEE 58th Annual Symposium on Foundations of Computer
  Science (FOCS)}, pages 415--426. IEEE, 2017.

\bibitem{van2020convex}
Joran van Apeldoorn, Andr{\'a}s Gily{\'e}n, Sander Gribling, and Ronald
  de~Wolf.
\newblock Convex optimization using quantum oracles.
\newblock {\em Quantum}, 4:220, 2020.

\bibitem{chakrabarti2020quantum}
Shouvanik Chakrabarti, Andrew~M Childs, Tongyang Li, and Xiaodi Wu.
\newblock Quantum algorithms and lower bounds for convex optimization.
\newblock {\em Quantum}, 4:221, 2020.

\bibitem{ambainis2019dynamic}
Andris Ambainis, Kaspars Balodis, Jānis Iraids, Martins Kokainis, Krišjānis
  Prūsis, and Jevgēnijs Vihrovs.
\newblock {\em Quantum Speedups for Exponential-Time Dynamic Programming
  Algorithms}, pages 1783--1793.

\bibitem{montanaro2018backtracking}
Ashley Montanaro.
\newblock Quantum-walk speedup of backtracking algorithms.
\newblock {\em Theory Comput.}, 14:Paper No. 15, 24, 2018.

\bibitem{kerenidis2018sdp}
Iordanis Kerenidis and Anupam Prakash.
\newblock A quantum interior point method for lps and sdps.
\newblock {\em ACM Transactions on Quantum Computing}, 1(1), oct 2020.

\bibitem{montanaro2020branch}
Ashley Montanaro.
\newblock Quantum speedup of branch-and-bound algorithms.
\newblock {\em Physical Review Research}, 2(1), Jan 2020.

\bibitem{ambainis2017tree}
Andris Ambainis and Martins Kokainis.
\newblock Quantum algorithm for tree size estimation, with applications to
  backtracking and 2-player games.
\newblock In {\em Proceedings of the 49th Annual ACM SIGACT Symposium on Theory
  of Computing}, STOC 2017, page 989–1002, New York, NY, USA, 2017.
  Association for Computing Machinery.

\bibitem{clausen1999branch}
Jens Clausen.
\newblock Branch and bound algorithms-principles and examples.
\newblock {\em Department of Computer Science, University of Copenhagen}, pages
  1--30, 1999.

\bibitem{morrison2016branch}
David~R. Morrison, Sheldon~H. Jacobson, Jason~J. Sauppe, and Edward~C. Sewell.
\newblock Branch-and-bound algorithms: A survey of recent advances in
  searching, branching, and pruning.
\newblock {\em Discrete Optimization}, 19:79--102, 2016.

\bibitem{clausen1999best}
Jens Clausen and Michael Perregaard.
\newblock On the best search strategy in parallel branch-and-bound: Best-first
  search versus lazy depth-first search.
\newblock {\em Annals of Operations Research}, 90:1--17, 1999.

\bibitem{apers2019walk}
Simon {Apers}, Andr{\'a}s {Gily{\'e}n}, and Stacey {Jeffery}.
\newblock {A Unified Framework of Quantum Walk Search}.
\newblock {\em arXiv e-prints}, page arXiv:1912.04233, December 2019.

\bibitem{Packebusch2016}
Tom Packebusch and Stephan Mertens.
\newblock Low autocorrelation binary sequences.
\newblock {\em Journal of Physics A: Mathematical and Theoretical},
  49(16):165001, March 2016.

\bibitem{jarret2018improved}
Michael Jarret and Kianna Wan.
\newblock Improved quantum backtracking algorithms using effective resistance
  estimates.
\newblock {\em Phys. Rev. A}, 97:022337, Feb 2018.

\bibitem{sherrington1975solvable}
David Sherrington and Scott Kirkpatrick.
\newblock Solvable model of a spin-glass.
\newblock {\em Phys. Rev. Lett.}, 35:1792--1796, Dec 1975.

\bibitem{Karp1972}
Richard~M Karp.
\newblock Reducibility among combinatorial problems.
\newblock In {\em Complexity of computer computations}, pages 85--103.
  Springer, 1972.

\bibitem{brandao2017sdp_better}
Fernando G.~S.~L. {Brand{\~a}o}, Amir {Kalev}, Tongyang {Li}, Cedric {Yen-Yu
  Lin}, Krysta~M. {Svore}, and Xiaodi {Wu}.
\newblock {Quantum SDP Solvers: Large Speed-ups, Optimality, and Applications
  to Quantum Learning}.
\newblock {\em arXiv e-prints}, page arXiv:1710.02581, October 2017.

\bibitem{erdHos1960evolution}
Paul Erd{\H{o}}s, Alfr{\'e}d R{\'e}nyi, et~al.
\newblock On the evolution of random graphs.
\newblock {\em Publ. Math. Inst. Hung. Acad. Sci}, 5(1):17--60, 1960.

\bibitem{rebentrost2018portfolio}
Patrick {Rebentrost} and Seth {Lloyd}.
\newblock {Quantum computational finance: quantum algorithm for portfolio
  optimization}.
\newblock {\em arXiv e-prints}, page arXiv:1811.03975, November 2018.

\bibitem{yalovetzky2021nisq}
Romina Yalovetzky, Pierre Minssen, Dylan Herman, and Marco Pistoia.
\newblock Nisq-hhl: Portfolio optimization for near-term quantum hardware,
  2021.

\bibitem{kerenidis19portfolio}
Iordanis Kerenidis, Anupam Prakash, and D\'{a}niel Szil\'{a}gyi.
\newblock Quantum algorithms for portfolio optimization.
\newblock In {\em Proceedings of the 1st ACM Conference on Advances in
  Financial Technologies}, AFT '19, page 147–155, New York, NY, USA, 2019.
  Association for Computing Machinery.

\end{thebibliography}

\section*{Acknowledgements}
The authors wish to thank Dylan Herman, Arthur Rattew, Ruslan Shaydulin, and Yue Sun for many helpful discussions and suggestions. Thanks also to the other members of the Global Technology Applied Research Center at JPMorgan Chase for their support and insights, and specially to Shaohan Hu and Chun-Fu (Richard) Chen for their \LaTeX\ wizardry.

\section*{Disclaimer}
This paper was prepared for information purposes with contributions from the Global Technology Applied Research Center of JPMorgan Chase. This paper is not a product of the Research Department of JPMorgan Chase or its affiliates. Neither JPMorgan Chase nor any of its affiliates make any explicit or implied representation or warranty and none of them accept any liability in connection with this paper, including, but not limited to, the completeness, accuracy, reliability of information contained herein and the potential legal, compliance, tax or accounting effects thereof. This document is not intended as investment research or investment advice, or a recommendation, offer or solicitation for the purchase or sale of any security, financial instrument, financial product or service, or to be used in any way for evaluating the merits of participating in any transaction.

\end{document}